\documentclass[final,a4paper,leqno]{siamltex}

% definitions used by included articles, reproduced here for 
% educational benefit, and to minimize alterations needed to be made
% in developing this sample file.

\usepackage{graphicx}
%\usepcackage{subfigure}
\graphicspath{{Pictures/}}
\usepackage{hyperref}

% some definitions of bold math italics to make typing easier.
% They are used in the corollary.

\usepackage{amsmath}
\usepackage{amssymb}
%\usepackage{latexsym}
%==== Some new definitions for mathematics --Begin--
\def \v#1{\relax
          \ifmmode{\bf #1\/}\else{$\bf #1\/$}\fi}
\def \M#1{\relax
          \ifmmode{#1}\else{$#1$}\fi}
\def \affr #1{\relax
              \ifmmode{\cal #1}\else{$\cal #1$}\fi}
\def \koord #1{\relax
              \ifmmode{\hbox{$\sf #1$}}\else{$\sf #1$}\fi}
%==== Some new definitions for mathematics --End--
\newcommand{\I}{\hat{\imath}}

\newtheorem{remark}[theorem]{Remark}
\newtheorem{example}[theorem]{Example}

\title{General Midpoint Subdivision}

% The thanks line in the title should be filled in if there is
% any support acknowledgement for the overall work to be included
% This \thanks is also used for the received by date info, but
% authors are not expected to provide this.

\author{Qi Chen
      \and Hartmut Prautzsch\thanks{Applied Geometry \& Computer Graphics (\url{http://geom.ibds.kit.edu}), Karlsruhe Institute of Technology, Am Fasanengarten 5, 76131 Karlsruhe, Germany ({\tt chenqi@ira.uka.de} and {\tt prautzsch@kit.edu}).}}

\begin{document}

\maketitle

\begin{abstract}
In this paper, we introduce two generalizations of midpoint subdivision and analyze the smoothness of the resulting subdivision surfaces at regular and extraordinary points.

The smoothing operators used in midpoint and mid-edge subdivision connect the midpoints of adjacent faces or of adjacent edges, respectively. An arbitrary combination of these two operators and the refinement operator that splits each face with $m$ vertices into $m$ quadrilateral subfaces forms a \emph{general midpoint subdivision operator}.
We analyze the smoothness of the resulting subdivision surfaces by estimating the norm of a special second order difference scheme and by using established methods for analyzing midpoint subdivision. The surfaces are smooth at  their regular points and they are also smooth at extraordinary points for a certain subclass of general midpoint subdivision schemes.

Generalizing the smoothing rules of non general midpoint subdivision schemes around extraordinary and regular vertices or faces results in a class of subdivision schemes, which includes the Catmull-Clark algorithm with restricted parameters. We call these subdivision schemes \emph{generalized Catmull-Clark schemes} and we analyze their smoothness properties.
\end{abstract}

\begin{keywords} 
subdivision surfaces, midpoint subdivision, difference schemes, extraordinary points, characteristic map
\end{keywords}

\begin{AMS}
65D18, 65D17, 68U07, 68U05
\end{AMS}

\pagestyle{myheadings}
\thispagestyle{plain}
\markboth{\uppercase{General Midpoint Subdivision}}{\uppercase{Qi Chen and Hartmut Prautzsch}}

%%%%%%%%%%%%%%%%%%%%%%%%%%%%%%
\section{Introduction}

The Lane-Riesenfeld algorithm \cite{LR80} for subdividing cardinal spline curves and tensor product surfaces has been fundamental to the development of box spline and general stationary subdivision schemes \cite{Prautzsch84,CLR84,DM84,DGL91,CDM91}. The algorithm is quite simple since it is based on iteratively computing midpoints of the edges and faces of regular quadrilateral meshes. It can easily be generalized to arbitrary meshes and this is now called \emph{midpoint subdivision}. It includes particular instances of the Catmull-Clark and Doo-Sabin schemes \cite{CC78,DS78}, which even predate the Lane-Riesenfeld algorithm.

A midpoint subdivision scheme consists of an operator $A^{n-1}R$ of degree $n\in \mathbb N$, which is used successively to subdivide an input mesh $\affr M$. 
The \emph{refinement operator} $R$ maps $\affr M$ to the quadrilateral mesh $R\affr M$ by splitting every face of $\affr M$ at its center as shown at the top of Figure~\ref{FIG:OperatorsRAV}.
The \emph{averaging operator} $A$ maps $\affr M$ to the dual mesh $A\affr M$ that connects the centers of adjacent faces as shown in the middle of Figure~\ref{FIG:OperatorsRAV}.

Lane and Riesenfeld developed their algorithm to generate known curves and surfaces but the limiting surfaces of the midpoint subdivision schemes are not explicitly known, which complicates their analysis. In fact, Catmull had no proof for his scheme, which he presented in his dissertation \cite{Catmull74} and which he published later together with Clark \cite{CC78}. The first analysis was outlined by Sabin in 1978 \cite{DS78} but it took another 15 years until Reif established a rigorous proof \cite{Reif93}. Reif's proof is based on an analysis of the subdivision matrix and requires (numerically) computing or estimating its spectral properties. The numerical nature of Reif's proof is a bottleneck when it is used to analyze infinite classes of subdivision schemes. Therefore, Zorin and Schr\"oder were merely able to analyze the midpoint schemes up to degree $9$ \cite{ZS2001} and only recently further geometric arguments were developed that helped to show that all midpoint subdivision surfaces of any arbitrarily high degree are $C^1$ everywhere \cite{PC2011}. 

Many arguments used in \cite{PC2011} also apply to other subdivision schemes based on convex combinations and further generalizations of the results are possible. However, certain technical difficulties and some fundamental challenges remain that are unsolved. In this paper, we address some of them and enlarge the class of subdivision schemes which can be analyzed, after appropriate extensions, by the technique developed in \cite{PC2011}. Although we add to the averaging and refinement operators ``only'' the edge midpoint operator, which is simpler than the other two, we are challenged by new problems:
\begin{itemize}
\item
The limiting surfaces of regular meshes are only known for midpoint subdivision and mid-edge subdivision. Again, if we had only finitely many schemes for which we could compute their symbols and run the standard analysis, this would not be a problem. In this paper, however, we consider infinitely many subdivision operators that can be decomposed into the three basic operators mentioned above.
\item
The mid-edge operator changes the ``orientation'' of a mesh and therefore, we face additional topological difficulties.
\item
The eigenvectors of a subdivision matrix have a basis of rotation symmetric eigenvectors of different frequencies. For midpoint subdivision, the eigenvectors with frequency 0 are identical for all valencies of the extraordinary point but this is not the case for the general midpoint schemes we consider in this paper. This is particularly a problem in Section~\ref{SECTION:DefGenMidSubII}, where we work with convex combinations other than midpoint averaging.
\end{itemize}

General midpoint subdivision is defined in Section~\ref{SECTION:DefGenMidSub} and its $C^1$-property for regular meshes is shown in Sections~\ref{SECTION:DifferenceSchemes} and \ref{SECTION:Smoothness}. For arbitrary meshes, the subdivision is shown to be convergent in Section~\ref{SECTION:BasicObservations} and its $C^1$-property is analyzed in Sections~\ref{SECTION:ReviewGeometricMethod2}, \ref{SECTION:ChapMap}, and \ref{SECTION:GenMidC1}. Some examples of general midpoint subdivision schemes are given in Section~\ref{SECTION:Examples} and another generalization of midpoint subdivision, \emph{generalized Catmull-Clark subdivision}, is defined and analyzed in Section~\ref{SECTION:DefGenMidSubII}.
We provide a few concluding remarks in Section~\ref{SECTION:Conclusion}.

%%%%%%%%%%%%%%%%%%%%%%%%%%%%%%
\section{General midpoint subdivision}\label{SECTION:DefGenMidSub}

First, we consider a generalization of midpoint subdivision including also the \emph{mid-edge or simplest subdivision operator} $V$. This operator maps a mesh $\affr M$ to the mesh $V\affr M$ that connects the midpoints of adjacent edges of $\affr M$ as illustrated at the bottom of Figure~\ref{FIG:OperatorsRAV}. 

\begin{figure}[htb]
\begin{center}
%{\includegraphics[scale=0.833333]{OperatorsRAV.pdf}}
{\includegraphics{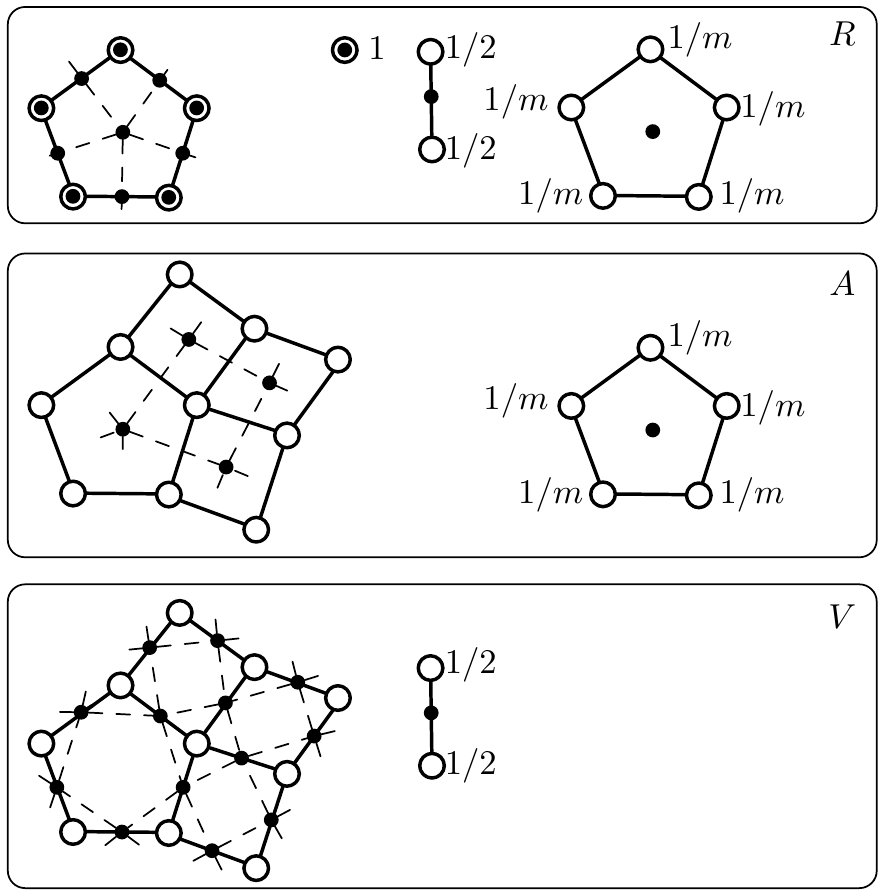}}
\caption{Three basic operators for subdividing quadrilateral meshes, where $m$ ($=$ number of adjacent or boundary edges) is the valence of a vertex or face: refinement operator $R$ (top), averaging operator $A$ (middle), and mid-edge or simplest operator $V$ (bottom).}
\label{FIG:OperatorsRAV}
\end{center}
\end{figure}

More precisely, we consider the \emph{general midpoint subdivision operators}
\[U=A^{a_s} V^{v_s} R^{r_s} A^{a_{s-1}} V^{v_{s-1}} R^{r_{s-1}}  \cdots A^{a_1} V^{v_1} R^{r_1}\]
for
\[a_1, v_1, r_1, \ldots, a_s, v_s, r_s \ge 0\]
with $a+v\ge 1$ and $v+r\ge 1$, where $a=\sum_{i=1}^s a_i, \; v=\sum_{i=1}^s v_i$, and $r=\sum_{i=1}^s r_i$.

The midpoint subdivision schemes $A^{n-1} R$ of degrees $n\ge 2$ and the mid-edge subdivision scheme $V$ are special general midpoint subdivision schemes. 

We show that general midpoint subdivision surfaces are $C^1$ at their regular points and that they are also $C^1$ at their extraordinary points for a certain subclass of general midpoint subdivision schemes.

%%%%%%%%%%%%%%%%%%%%%%%%%%%%%%
\section{A second order difference scheme}\label{SECTION:DifferenceSchemes}
To analyze the smoothness of general midpoint subdivision schemes for regular (quadrilateral) meshes, we introduce a special second order difference scheme in this section and analyze its norm in the next section.

A regular mesh $\affr C$ can be represented by the biinfinite matrix $\affr C = [\v c_{\v i}]_{\v i \in \mathbb Z^2}$ of its vertices $\v c_{\v i}$, which are connected by the edges $\v c_{\v j}\v c_{\v j +\v e_1}, \v c_{\v j}\v c_{\v j +\v e_2},\; \v j\in \mathbb Z^2$,
where
\[[\v e_1\;\v e_2\;\v e_3\;\v e_4] = \left[ \begin{array}{rrrr} 1 & 0 & 1 & -1\\ 0 & 1 & 1 & 1 \end{array}\right]\;.\]

To analyze smoothness, we need the (backward) differences
\[\nabla_k \v c_{\v i} = \v c_{\v i} - \v c_{\v i-\v e_k}\;,\quad k=1, 2, 3, 4,\]
and the mesh $\affr C_{\widehat\nabla \nabla} := \left[\widehat\nabla \nabla \v c_{\v i}\right]_{\v i \in \mathbb Z^2}$ of the second order differences
\[\widehat\nabla \nabla \v c_{\v i} := [\nabla_1 \nabla_1 \v c_{\v i}\quad \nabla_3 \nabla_1 \v c_{\v i}\quad \nabla_4 \nabla_1 \v c_{\v i}\quad \nabla_2 \nabla_2 \v c_{\v i}\quad \nabla_3 \nabla_2 \v c_{\v i}\quad \nabla_4 \nabla_2 \v c_{\v i}]\;,\]
where $\nabla_j \nabla_k \v c_{\v i}  = \nabla_j (\nabla_k \v c_{\v i})$.

Let $U$ be $R$, $A$, $V$, or a general midpoint subdivision operator. Since $U$ maps a linear mesh $[l(\v i)]_{\v i \in \mathbb Z^2}$ onto itself, where $l: \mathbb R^2 \to \mathbb R$ is any linear function, a second order difference scheme $U_{\widehat\nabla \nabla}$ exists such that
\begin{equation}\label{EQ:Diff2}
(U \affr C)_{\widehat\nabla \nabla} = U_{\widehat\nabla \nabla} \affr C_{\widehat\nabla \nabla}
\end{equation}
for all meshes $\affr C$ (see \cite[Equations (11) and (12)]{Kobbelt2000}). 

Furthermore, for a mesh $\affr C = [\v c_{\v i}]_{\v i \in \mathbb Z^2}$, each element of $\{ \nabla_j \nabla_k \v c_{\v i} \;| \; j, k \in \{1, 2\},  \v i \in \mathbb Z^2\}$ is a linear combination of $\affr C_{\widehat\nabla \nabla}$ because 
\[\nabla_1 \nabla_2 \v c_{\v i}  = \nabla_2 \nabla_1 \v c_{\v i} = \nabla_3 \nabla_1 \v c_{\v i}- \nabla_1 \nabla_1 \v c_{\v i - \v e_2}\;.\]

%%%%%%%%%%%%%%%%%%%%%%%%%%%%%%
\section{Smoothness for regular meshes}\label{SECTION:Smoothness}
Applying a general midpoint subdivision operator
\[U=A^{a_s} V^{v_s} R^{r_s} \cdots A^{a_1} V^{v_1} R^{r_1}\]
to the grid $\mathbb Z^2$, we obtain the scaled grid $\mathbb Z^2\, /\, 2^{r+v/2}$ or a rotated version of it, where $r=\sum_i r_i$ and $v=\sum_i v_i$. 
We denote the \emph{scaling factor} by
\begin{equation}\label{EQ:SCALING}
\sigma = \sigma(U) = 2^{-r-v/2}
\end{equation}
and recall the following well-known fact, which can be derived easily from  \cite[Theorem 7.6]{Dyn92}.
\begin{theorem}[{\rm Smoothness condition for regular control meshes}]
\label{SATZ:C1Condition}
Let
\[\Vert \affr C\Vert_\infty := \sup_{\v i\in \mathbb Z^2} \Vert\v c_{\v i}\Vert\quad \mbox{and} \quad
\Vert U\Vert := \sup_{\Vert \affr C\Vert_\infty=1} \Vert U \affr C\Vert_\infty\;.\]
If $U$ maps $\mathbb Z^2$ to $\sigma\, \mathbb Z^2$ and if $U_{\widehat\nabla \nabla}/\sigma$ has some contractive power, i.\,e., if
\[\left\Vert U^k_{\widehat\nabla \nabla} \right\Vert < \sigma^k\]
for some $k$, then $U$ is a \emph{$C^1$-scheme}, i.\,e., for any bounded mesh $\affr C$ there is a continuously differentiable function $\v f(x, y)$ such that
\[\lim_{k\to\infty} \left\Vert U^k \affr C - \v f\left(\sigma^{k} \, \mathbb Z^2\right)\right\Vert = 0\;.\]
\end{theorem}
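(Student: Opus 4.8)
The plan is to reduce the statement to Dyn's general convergence-and-smoothness criterion \cite[Theorem 7.6]{Dyn92} by reading $U$ as a stationary subdivision operator on data sampled over $\mathbb Z^2$ and by tracking how its associated difference schemes behave under iteration. The structural fact that makes this possible is the affine invariance noted in Section~\ref{SECTION:DifferenceSchemes}: because $U$ reproduces every linear mesh, it admits a first order difference scheme, which I write $U_\nabla$, and the second order difference scheme $U_{\widehat\nabla \nabla}$ of \eqref{EQ:Diff2}, each commuting with subdivision. I would then introduce the \emph{divided} first order scheme $U_\nabla/\sigma$, whose iterates reproduce the slopes of the refined control nets: since after $k$ steps the samples lie on $\sigma^k\,\mathbb Z^2$, the divided first differences of $U^k\affr C$ are exactly $(U_\nabla/\sigma)^k$ applied to $\affr C_\nabla$.

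First I would observe that iterating \eqref{EQ:Diff2} gives $(U^k\affr C)_{\widehat\nabla \nabla} = U^k_{\widehat\nabla \nabla}\,\affr C_{\widehat\nabla \nabla}$, so the hypothesis $\Vert U^k_{\widehat\nabla \nabla}\Vert < \sigma^k$ says precisely that $U_{\widehat\nabla \nabla}/\sigma$ has a contractive power, i.e. $\Vert (U_{\widehat\nabla \nabla}/\sigma)^k\Vert < 1$. The key point is that $U_{\widehat\nabla \nabla}/\sigma$ is (a linear image of) the first order difference scheme of the divided scheme $U_\nabla/\sigma$. Hence Dyn's criterion, applied to $U_\nabla/\sigma$, yields that $U_\nabla/\sigma$ converges to a continuous limit, namely the gradient field of the sought surface. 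Convergence of the divided first differences is exactly the statement that $U$ produces a continuously differentiable limit $\v f$; and since $\Vert U^k_\nabla\Vert = \sigma^k\,\Vert (U_\nabla/\sigma)^k\Vert$ stays comparable to $\sigma^k\to 0$, the raw first differences of $U^k\affr C$ vanish in the limit, which gives the $C^0$-convergence $\Vert U^k\affr C - \v f(\sigma^k\,\mathbb Z^2)\Vert\to 0$ asserted in the theorem.

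The part I expect to require the real care is the bookkeeping that matches the specific operator $\widehat\nabla \nabla$ and the single scalar factor $\sigma = 2^{-r-v/2}$ of \eqref{EQ:SCALING} to Dyn's uniform-grid formulation, since two features here are nonstandard. First, $\widehat\nabla \nabla$ stores only six directional second differences rather than a full second-difference array, so I must confirm that $\Vert U^k_{\widehat\nabla \nabla}\Vert$ genuinely dominates \emph{all} second order oscillation; this is where the identity $\nabla_1 \nabla_2 \v c_{\v i} = \nabla_3 \nabla_1 \v c_{\v i} - \nabla_1 \nabla_1 \v c_{\v i - \v e_2}$ from Section~\ref{SECTION:DifferenceSchemes} enters, expressing every mixed second difference through the stored entries. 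Second, when $v$ is odd the image grid $U\,\mathbb Z^2$ is a \emph{rotated} copy of $\sigma\,\mathbb Z^2$ rather than an axis-aligned scaling, so I would check that the rotation acts as an isometry on the relevant norms and therefore leaves both the contractivity estimate and the scalar $\sigma$ unchanged. Once these two identifications are in place, the $C^1$-conclusion and the limiting estimate follow verbatim from \cite[Theorem 7.6]{Dyn92}.
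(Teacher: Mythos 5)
Your proposal is correct and takes essentially the same route as the paper, which gives no self-contained proof of Theorem~\ref{SATZ:C1Condition} but states that it can be derived easily from \cite[Theorem 7.6]{Dyn92}: your reduction via the divided first-order scheme $U_\nabla/\sigma$, whose difference scheme is $U_{\widehat\nabla \nabla}/\sigma$, together with the identity $\nabla_1 \nabla_2 \v c_{\v i} = \nabla_3 \nabla_1 \v c_{\v i} - \nabla_1 \nabla_1 \v c_{\v i - \v e_2}$ to recover the mixed second differences from the six stored directional ones, is exactly the intended derivation. Your second worry, about a rotated image grid when $v$ is odd, is superfluous within this theorem, since the hypothesis that $U$ maps $\mathbb Z^2$ to $\sigma\,\mathbb Z^2$ already excludes that case; the paper deals with odd $v$ only later, by passing to $U^2$ in the proof of Theorem~\ref{SATZ:C1REGULAR}.
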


To check the prerequisites of this theorem, we use
\begin{lemma}[{\rm Estimates about the difference schemes}]
\label{LEMMA:Estimates1}
\begin{enumerate}
%\item[]
\item[(a)] $\Vert A_{\widehat\nabla \nabla}\Vert=1$\;,
\item[(b)] $\Vert R_{\widehat\nabla \nabla}\Vert=1/2$\;,
\item[(c)] $\Vert V_{\widehat\nabla \nabla}\Vert\le1/2$\;, \mbox{and}
\item[(d)] $\Vert A_{\widehat\nabla \nabla}R_{\widehat\nabla \nabla}\Vert \le 3/8$.
\end{enumerate}
\end{lemma}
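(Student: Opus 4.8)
The plan is to reduce the whole lemma to a mask computation. On a regular mesh each of $R$, $A$, $V$ is a linear subdivision operator that is translation invariant on the relevant (possibly dual or $45^\circ$-rotated) lattice and reproduces affine functions; by the remark after (\ref{EQ:Diff2}) the induced scheme $U_{\widehat\nabla\nabla}$ is therefore represented by a finitely supported, matrix-valued mask (matrix-valued because $\widehat\nabla\nabla\v c_{\v i}$ has six components). For any such scheme the $\Vert\cdot\Vert_\infty$ operator norm is exactly the maximal absolute row sum of that mask, i.e.\ the maximum, over all output components and output vertices, of the sum over all (input vertex, input component) pairs of the absolute mask entries feeding that output: the upper bound is the triangle inequality, and the matching lower bound comes either from choosing the signs or, where a single coarse difference feeds an output, from feeding the difference scheme the input difference mesh supported on that one component. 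Thus each estimate amounts to writing the relevant mask and reading off its largest row sum, and the only real question is how much cancellation occurs.

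For (a), $A$ is a convolution — each dual vertex is the average with weight $\tfrac14$ of the four corners of a face — so it commutes with every $\nabla_j\nabla_k$. Hence $A_{\widehat\nabla\nabla}$ is block diagonal: it applies the same nonnegative averaging mask $\{\tfrac14,\tfrac14,\tfrac14,\tfrac14\}$ to each of the six components separately. Each row sum is $4\cdot\tfrac14=1$, and equality already holds for the constant input difference mesh, giving $\Vert A_{\widehat\nabla\nabla}\Vert=1$. For (b), $R$ is the bilinear (tensor linear B-spline) refinement, for which a direct difference computation shows that every fine second difference is either $0$ or $\tfrac12$ times a single coarse second difference: writing $f$ for the refined and $c$ for the coarse values in a single coordinate direction, one has $\nabla\nabla f_{2i+1}=\tfrac12\nabla\nabla c_{i+1}$ and $\nabla\nabla f_{2i}=0$, and the diagonal and mixed cases follow from the tensor structure together with the relation $\nabla_1\nabla_2\v c_{\v i}=\nabla_3\nabla_1\v c_{\v i}-\nabla_1\nabla_1\v c_{\v i-\v e_2}$ recorded after (\ref{EQ:Diff2}). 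Every row of $R_{\widehat\nabla\nabla}$ therefore has $\ell_1$-sum at most $\tfrac12$, with equality already for the $\nabla_1\nabla_1$ output at a new vertex, so $\Vert R_{\widehat\nabla\nabla}\Vert=\tfrac12$.

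For (c), $V$ keeps the edge midpoints as new vertices, so its intrinsic mesh directions are the old diagonals $\v e_3,\v e_4$; this is exactly the orientation change flagged in the introduction. I would therefore build $V_{\widehat\nabla\nabla}$ by expressing the second differences of the rotated grid through the six old ones and then bound the resulting row sums; a short computation gives $\le\tfrac12$, and here only the upper bound is required.

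The main obstacle is (d): $\tfrac38$ is strictly below the submultiplicative bound $\Vert A_{\widehat\nabla\nabla}\Vert\,\Vert R_{\widehat\nabla\nabla}\Vert=1\cdot\tfrac12=\tfrac12$, so it cannot be deduced from (a) and (b); one must form the composite mask $A_{\widehat\nabla\nabla}R_{\widehat\nabla\nabla}$ and exploit cancellation. The mechanism is that, by (b), $R_{\widehat\nabla\nabla}$ emits genuine zeros at the old-vertex phases and $\tfrac12$ times a single coarse difference at the new phases; when $A$ then averages four such entries with weights $\tfrac14$, at least one of them is a zero, so the largest attainable row sum is $\tfrac14(\tfrac12+\tfrac12+\tfrac12+0)=\tfrac38$ rather than $\tfrac14\cdot 4\cdot\tfrac12=\tfrac12$. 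Verifying that this worst case is actually realized by one of the six output components while all others give no more — i.e.\ carrying out the composite bookkeeping, where I expect the extremal row to involve the diagonal differences $\nabla_3,\nabla_4$ — is the one genuinely computational step and the place where I expect all the care to be required.
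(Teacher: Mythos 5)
Parts (a)--(c) of your proposal are correct and follow essentially the paper's route: the paper also observes $A_{\widehat\nabla\nabla}=A$ (your convolution/commutation argument is a fine justification), verifies (b) by the phase computation you describe ($\nabla_1^2$ of the refined mesh is $0$ at vertex phases, $\tfrac12\nabla_1^2\v c$ resp.\ $\tfrac14(\nabla_1^2\v c+\nabla_1^2\v c')$ at the others, with mixed outputs re-expressed via $\nabla_2\nabla_1=\nabla_3\nabla_1-\nabla_1\nabla_1(\cdot-\v e_2)$), and proves (c) by showing each new second difference of $V\affr C$ equals $\tfrac12\nabla_3\nabla_1\v c$ of a coarse diagonal difference --- your deferred ``short computation.'' However, already in (b) you oversimplify in a way that matters later: the claim that every fine second difference is ``$0$ or $\tfrac12$ times a single coarse second difference'' holds only for the pure outputs $\nabla_1\nabla_1,\nabla_2\nabla_2$. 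The mixed outputs have \emph{no} zero phases at all: at the vertex phase one gets, e.g., $\nabla_3\nabla_1\v b=\tfrac14\nabla_2\nabla_1\v c=\tfrac14(\nabla_3\nabla_1\v c-\nabla_1^2\v c)$, and at the other three phases likewise $\tfrac14$ times a sum of two canonical differences --- row sum $\tfrac12$, but never zero.

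This breaks your argument for (d), which is the one part that cannot be outsourced to submultiplicativity. Your proposed mechanism --- $A$ averages four $R$-outputs of row sum $\le\tfrac12$, at least one of which vanishes, hence $\tfrac14(\tfrac12+\tfrac12+\tfrac12+0)=\tfrac38$ --- does not match any actual row of the composite mask. For the pure outputs, \emph{two} of the four averaged entries vanish, giving row sum $\tfrac14$ (the paper's $4/16$ in $\nabla_1^2\v b_{21}=\tfrac{1}{16}(3\nabla_1^2\v c_{21}+\nabla_1^2\v c_{20})$); for the extremal mixed outputs, \emph{none} of the four vanish, so your bookkeeping yields only the trivial bound $\tfrac14\cdot4\cdot\tfrac12=\tfrac12$ and the proof stalls there. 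The true source of $3/8$ on the mixed rows is cancellation among overlapping \emph{first} differences across the four phases, followed by a judicious regrouping into the six canonical second differences: the paper computes, e.g., $\nabla_3\nabla_1\v b_{21}=\tfrac{1}{16}(3\nabla_1\v c_{21}+\nabla_1\v c_{11}+\nabla_1\v c_{20}-5\nabla_1\v c_{10})=\tfrac{1}{16}(3\nabla_3\nabla_1\v c_{21}+\nabla_4\nabla_1\v c_{11}+2\nabla_1^2\v c_{20})$, row sum $6/16=3/8$; note the first-difference form has $\ell_1$ mass $10/16$, so the final bound depends on finding a good (non-unique) second-difference representation. You correctly anticipated that the extremal rows involve $\nabla_3,\nabla_4$ and that care is needed, but the cancellation mechanism you announce is the wrong one, and carrying out your plan as stated would not deliver the bound $3/8$.
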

\begin{proof}
{
Since $A_{\widehat\nabla \nabla} = A$ and $\Vert A\Vert=1$, we obtain (a).

\begin{figure}[htb]
\begin{center}
%{\includegraphics[scale=0.833333]{OperatorsR6}}
{\includegraphics{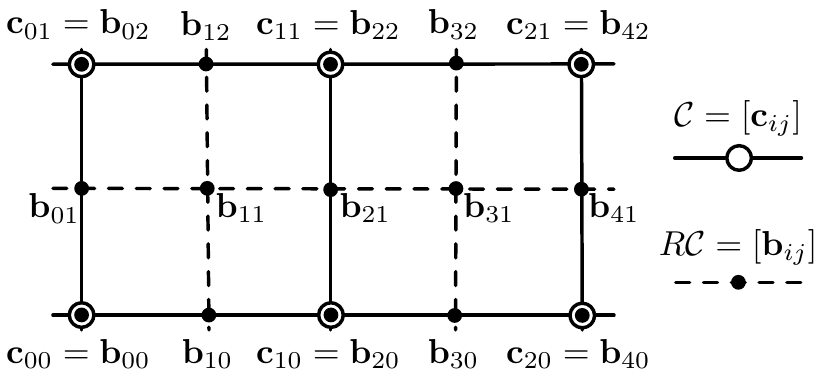}}
\caption{A pair of meshes $\cal C$ and $R \cal C$.}
\label{FIG:OperatorsR6}
\end{center}
\end{figure}

Figure~\ref{FIG:OperatorsR6} shows a pair of meshes $\affr C$ and $R\affr C$ schematically. For these meshes, we have
\begin{eqnarray*}
\nabla^2_1\v b_{20} &=& \nabla^2_1\v b_{21} = 0\;,\\
\nabla^2_1\v b_{30} &=& \frac{1}{2} \nabla^2_1\v c_{20}\;,\\
\nabla^2_1\v b_{31} &=& \frac{1}{4}\left(\nabla^2_1\v c_{20}+ \nabla^2_1\v c_{21}\right)\;,\\\nabla_3\nabla_1\v b_{21}&=&\frac{1}{4}\nabla_2\nabla_1\v c_{11} = \frac{1}{4}(\nabla_3\nabla_1\v c_{21}-\nabla_1\nabla_1\v c_{21})\;,\quad \mbox{and}\\
\nabla_3\nabla_1\v b_{31}&=& \frac{1}{4}\left(\nabla_3\nabla_1\v c_{21}+\nabla^2_1\v c_{20}\right)\;.
\end{eqnarray*}
Similarly, we can derive such equalities for all other elements in $R_{\widehat\nabla \nabla}\affr C_{\widehat\nabla \nabla}$. This proves (b).

\begin{figure}[htb]
\begin{center}
%{\includegraphics[scale=0.833333]{OperatorsV4}}
{\includegraphics{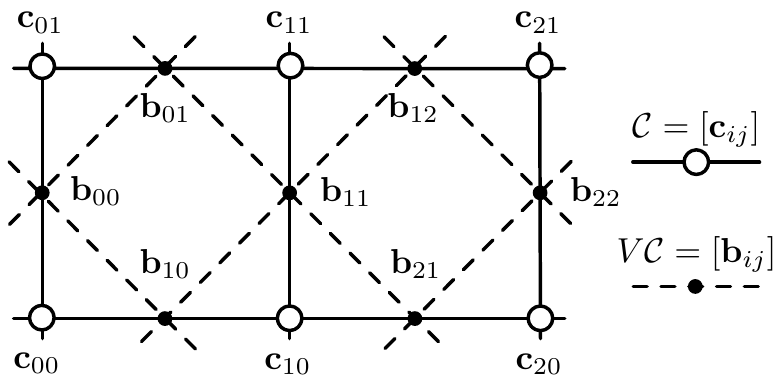}}
\caption{A pair of meshes $\cal C$ and $V \cal C$.}
\label{FIG:OperatorsV4}
\end{center}
\end{figure}

Using the notation shown schematically in Figure~\ref{FIG:OperatorsV4} for a pair of meshes $\affr C$ and $V \affr C$, (c) follows from
\[\nabla^2_2\v b_{12} = \nabla_3\nabla_2 \v b_{12} = \nabla_3\nabla_2 \v b_{22} =  \frac{1}{2} \nabla_3\nabla_1 \v c_{21}\]
and from analogous equalities for all other elements in $V_{\widehat\nabla \nabla}\affr C_{\widehat\nabla \nabla}$.

\begin{figure}[htb]
\begin{center}
%{\includegraphics[scale=0.833333]{OperatorsR7}}
{\includegraphics{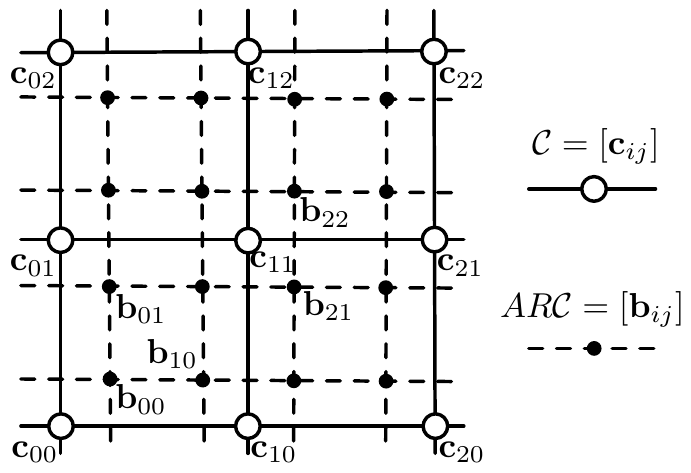}}
\caption{A pair of meshes $\cal C$ and $AR \cal C$.}
\label{FIG:OperatorsR7}
\end{center}
\end{figure}
Finally, for the meshes $\affr C$ and $AR \affr C$ in Figure~\ref{FIG:OperatorsR7}, we have
\begin{eqnarray*}
\nabla^2_1\v b_{21} &=& \frac{1}{16}\left(3\nabla^2_1\v c_{21} + \nabla^2_1\v c_{20}\right)\;,\\
\nabla_3\nabla_1\v b_{21}&=&\frac{1}{16}(3\nabla_1\v c_{21}+\nabla_1\v c_{11}+\nabla_1\v c_{20}-5\nabla_1\v c_{10}) \\
&=& \frac{1}{16}\left(3\nabla_3\nabla_1\v c_{21}+\nabla_4\nabla_1\v c_{11} +2\nabla_1^2 \v c_{20}\right)\;,\\
\nabla_3\nabla_1\v b_{22}&=&\frac{1}{16}(\nabla_1\v c_{22}+\nabla_1\v c_{12}+3\nabla_1\v c_{21}-3\nabla_1\v c_{11}-2\nabla_1\v c_{10}) \\
&=& \frac{1}{16}\left(\nabla_3\nabla_1\v c_{22}+\nabla_4\nabla_1\v c_{12} +2\nabla_1^2 \v c_{21}+2\nabla_3\nabla_1\v c_{21}\right)\;,
\end{eqnarray*}
and analogous equalities for all other elements in $A_{\widehat\nabla \nabla}R_{\widehat\nabla \nabla}\affr C_{\widehat\nabla \nabla}$. Hence,
\[\left\Vert A_{\widehat\nabla \nabla}R_{\widehat\nabla \nabla}\right\Vert
\le \max\left\{\frac{3+1}{16}, \frac{3+1+2}{16}, \frac{1+1+2+2}{16}\right\} = \frac{3}{8}\;,\]
which establishes (d).
}
\qquad
\end{proof}

\begin{theorem}[{\rm $C^{1}$ continuity for regular meshes}]
\label{SATZ:C1REGULAR}
Any general midpoint scheme $U=A^{a_s} V^{v_s} R^{r_s} \cdots A^{a_1} V^{v_1} R^{r_1}$ with $a+v\ge 1$ and $v+r\ge 1$ is a $C^1$-scheme for regular meshes, where $a = \sum_{i=1}^s a_i$, $v =  \sum_{i=1}^s v_i$, and $r=\sum_{i=1}^s r_i$.
\end{theorem}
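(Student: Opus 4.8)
The plan is to reduce everything to Theorem~\ref{SATZ:C1Condition}: for each admissible $U$ it suffices to produce an exponent $k$ with $\Vert U^k_{\widehat\nabla \nabla}\Vert < \sigma^k$, where $\sigma = 2^{-r-v/2}$ and $U$ maps $\mathbb Z^2$ to $\sigma\,\mathbb Z^2$ (up to a rotation) by the scaling computation~\eqref{EQ:SCALING}. The structural fact I would establish first is that the second order difference scheme is multiplicative: applying \eqref{EQ:Diff2} to a composition $U_1U_2$ and invoking the uniqueness of the scheme yields $(U_1U_2)_{\widehat\nabla \nabla} = (U_1)_{\widehat\nabla \nabla}(U_2)_{\widehat\nabla \nabla}$. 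Hence $U_{\widehat\nabla \nabla}$ factors as the ordered product of $a$ copies of $A_{\widehat\nabla \nabla}$, $v$ copies of $V_{\widehat\nabla \nabla}$, and $r$ copies of $R_{\widehat\nabla \nabla}$, and $U^k_{\widehat\nabla \nabla} = (U_{\widehat\nabla \nabla})^k$, so the whole problem becomes an estimate on products of the three elementary difference schemes.

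The easy regime is $v\ge 1$. Here submultiplicativity of the operator norm together with Lemma~\ref{LEMMA:Estimates1}(a)--(c) gives $\Vert U_{\widehat\nabla \nabla}\Vert \le 1^{a}\,(1/2)^{v}\,(1/2)^{r} = 2^{-(v+r)}$. Dividing by $\sigma = 2^{-(r+v/2)}$ leaves the ratio $2^{-v/2}$, which is strictly less than $1$ exactly when $v\ge 1$. Thus $\Vert U_{\widehat\nabla \nabla}\Vert < \sigma$ and the exponent $k=1$ already satisfies the hypothesis of Theorem~\ref{SATZ:C1Condition}.

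The case $v=0$ is where Lemma~\ref{LEMMA:Estimates1}(d) and the passage to a power become essential, and it is the step I expect to be the main obstacle. Now $a\ge 1$ and $r\ge 1$, but the crude bound only gives $\Vert U_{\widehat\nabla \nabla}\Vert \le 2^{-r} = \sigma$, without strictness. The idea is to regroup the factors so that at least one adjacent pair $A_{\widehat\nabla \nabla}R_{\widehat\nabla \nabla}$ is estimated by $3/8$ rather than by $1\cdot(1/2)=1/2$; each such regrouping improves the bound by the factor $(3/8)/(1/2)=3/4$. Reading $U$ as a word in the two letters $A$ and $R$, if that word already contains the pattern $AR$, I extract one such pair and obtain $\Vert U_{\widehat\nabla \nabla}\Vert \le (3/8)(1/2)^{r-1} = (3/4)\,2^{-r} < \sigma$, so $k=1$ works. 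If instead the word contains no $AR$, then no $R$ can follow any $A$, which forces the word into the block form $R^{r}A^{a}$. In that case $U^2$, read as a word, is $R^{r}A^{a}R^{r}A^{a}$, whose interior junction is an $AR$ pattern; extracting that single pair gives $\Vert U^2_{\widehat\nabla \nabla}\Vert \le (3/8)(1/2)^{2r-1} = (3/4)\,2^{-2r} < 2^{-2r} = \sigma^2$, so $k=2$ works.

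In all cases the hypothesis $\Vert U^k_{\widehat\nabla \nabla}\Vert < \sigma^k$ of Theorem~\ref{SATZ:C1Condition} is met, so $U$ is a $C^1$-scheme, which is the claim. The only genuinely delicate point is the combinatorial observation in the $v=0$ analysis that the absence of an $AR$ factor forces the global shape $R^{r}A^{a}$, which I would justify by the elementary remark that a word over $\{A,R\}$ avoiding $AR$ is a block of $R$'s followed by a block of $A$'s. I would also record that $U^k$ maps $\mathbb Z^2$ to $\sigma^k\,\mathbb Z^2$, possibly rotated because of the orientation-changing operator $V$, which follows from iterating~\eqref{EQ:SCALING}; this rotation affects neither the scaling magnitude nor the norm estimates above.
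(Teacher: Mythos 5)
Your overall route is the paper's own: factor the difference scheme of a composition into the elementary schemes $A_{\widehat\nabla \nabla}$, $V_{\widehat\nabla \nabla}$, $R_{\widehat\nabla \nabla}$, estimate via Lemma~\ref{LEMMA:Estimates1} with the pair bound (d) rescuing the case $v=0$, and feed the resulting contraction into Theorem~\ref{SATZ:C1Condition}. Your case analysis is in fact slightly sharper than the paper's, which works uniformly with $U^2$ and bounds $\Vert (U^2)_{\widehat\nabla \nabla}\Vert$ by $2^{-2v-2r}$ if $v\ge 1$ and by $\frac{3}{4}\,2^{-2r}$ if $v=0$; your explicit combinatorial remark that a word over $\{A,R\}$ avoiding the pattern $AR$ must have the block form $R^rA^a$, so that $U^2$ always contains an extractable $A_{\widehat\nabla \nabla}R_{\widehat\nabla \nabla}$ pair, is exactly the observation the paper leaves implicit, and your numbers $(3/4)\,2^{-r}$ and $(3/4)\,2^{-2r}$ match. (A small quibble: the difference scheme need not be unique as a matrix, only its induced action on meshes of the form $\affr C_{\widehat\nabla \nabla}$ is well defined, because $U$ reproduces linear meshes; that induced action is all the multiplicativity and the norm estimates require, and the paper uses the same fact tacitly.)

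The one step that does not go through as written is your treatment of the rotation. When $v$ is odd, $U$ maps $\mathbb Z^2$ to a $45^\circ$-rotated copy of $\sigma\,\mathbb Z^2$, so the hypothesis of Theorem~\ref{SATZ:C1Condition} that $U$ maps $\mathbb Z^2$ to $\sigma\,\mathbb Z^2$ fails for $U$ itself, and the ``some contractive power $k$'' clause does not absorb this: the conclusion of that theorem is a parametrization over $\sigma^k\,\mathbb Z^2$, and its proof presupposes that the grid is reproduced up to scaling (translations, as produced by $A$, are harmless; a genuine rotation is not). Your remark that the rotation ``affects neither the scaling magnitude nor the norm estimates'' is true but beside the point, since what is at stake is the applicability of the theorem, not the estimates. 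The repair is the paper's own move and costs you nothing: apply Theorem~\ref{SATZ:C1Condition} to $U^2$, whose $V$-count $2v$ is even so that $U^2$ does map $\mathbb Z^2$ onto $\sigma(U^2)\,\mathbb Z^2$, verify $\Vert (U^2)_{\widehat\nabla \nabla}\Vert < \sigma(U^2)$ from your bounds (you already did this in the block case, and for $v\ge 1$ it follows by squaring $\Vert U_{\widehat\nabla \nabla}\Vert \le 2^{-v-r}$), and then add the bridging sentence that $U$ and $U^2$ generate the same limiting surfaces, so the $C^1$-property of $U^2$ transfers to $U$. With that one correction your proof coincides with the paper's.
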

\begin{proof}
{
Using Equation~(\ref{EQ:Diff2}) and Lemma~\ref{LEMMA:Estimates1}, we obtain
\begin{eqnarray*}
\left\Vert \left(U^2\right)_{\widehat\nabla \nabla}\right\Vert 
&=& \left\Vert A_{\widehat\nabla \nabla}^{a_s} \cdots  R_{\widehat\nabla \nabla}^{r_1} A_{\widehat\nabla \nabla}^{a_s} \cdots  R_{\widehat\nabla \nabla}^{r_1} \right\Vert\\
&\le& \left\{ \begin{array}{ll}2^{-2v-2r} & \mbox{if $v\ge 1$}\\
\frac{3}{4}\,2^{-2r} & \mbox{if $v=0$ and $a, r\ge 1$}\end{array}\right.\\
&<& 2^{-v-2r}=\sigma\left(U^2\right)\;.
\end{eqnarray*}
Since $U^2$ maps $\mathbb Z^2$ to $\sigma\left(U^2\right) \, \mathbb Z^2$, we conclude from Theorem~\ref{SATZ:C1Condition} that $U^2$ is a $C^1$-scheme. Since $U$ and $U^2$ have the same limiting surfaces, $U$ is also a $C^1$-scheme.
}
\qquad
\end{proof}

%%%%%%%%%%%%%%%%%%%%%%%%%%%%%%
\section{Basic observations}\label{SECTION:BasicObservations}
In this section, we consider general midpoint subdivision for arbitrary (quadrilateral) meshes with extraordinary vertices or faces and we show that general midpoint subdivision converges, i.\,e., general midpoint subdivision surfaces are continuous.

Interior faces and vertices of a (quadrilateral)  mesh are called \emph{extraordinary} if their valence does not equal $4$.
Subdividing by $V$, $R$, and $A$ does not increase the number of extraordinary elements and isolates these elements. Therefore, it suffices to consider only (sub)meshes with one extraordinary element, as illustrated in Figure~\ref{Abb:BSP_Ring_Netz}. These meshes are called \emph{ringnets}.

\begin{figure}[htb]
\begin{center}
%{\includegraphics[scale=0.833333]{RingStruktur-2.pdf}}
{\includegraphics{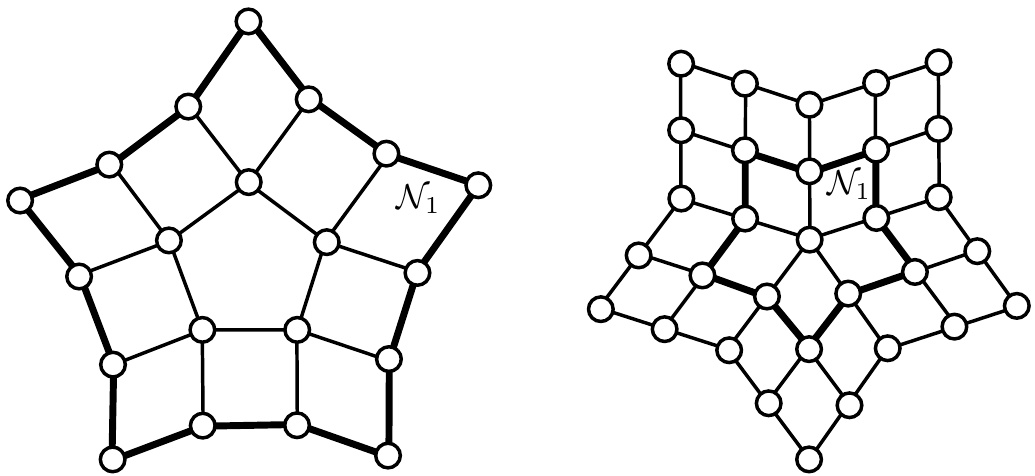}}
\caption{Examples of rings and ringnets: a $1$-ringnet with an extraordinary face of valence $5$ (left) and  a $2$-ringnet with an extraordinary vertex of valence $5$ (right). The first rings $\affr N_1$ in both meshes are marked by bold lines.}
\label{Abb:BSP_Ring_Netz}
\end{center}
\end{figure}

Given a ringnet $\affr N$ and a general midpoint subdivision operator 
\[U=A^{a_s} V^{v_s} R^{r_s} A^{a_{s-1}} V^{v_{s-1}} R^{r_{s-1}}  \cdots A^{a_1} V^{v_1} R^{r_1}\;,\]
we generate the sequence $\affr N^{(k)} = U^k \affr N$. Furthermore, we subdivide just the regular parts of any $\affr N^{(k)}$ and obtain for every $k$ a limiting surface $\v s_k$. Since $\v s_k$ is part of $\v s_{k+1}$, it suffices to study the operator $U^2$ instead of $U$ and we assume without loss of generality that $v=\sum_{i=1}^s v_i$ is even. Under this assumption, the ``orientation'' of $U \affr N$ does not change from that of $\affr N$ and the meshes $\affr N^{(k)}$ are either all primal (i.\,e., have no extraordinary face) or all dual (i.\,e., have no extraordinary vertex).

We say that a mesh $\affr N$ \emph{influences} another subdivided mesh $\affr M$ if during subdivision every vertex in $\affr N$ has an effect on some vertex in $\affr M$ and if additionally all vertices in $\affr M$ depend on $\affr N$.

\begin{definition}[{\rm Ring, ringnet, core (mesh)}]
\label{DEF:Omega}
Let $\affr N_0^{(k)}$ be the subnet of $\affr N^{(k)}$ consisting of the extraordinary vertex or face of $\affr N^{(k)}$.
The $k$-th ring around $\affr N_0$ is denoted by  $\affr N_k$. 
The mesh $\affr N_{0\ldots k}$ consists of $\affr N_0$ and the next $k$ rings of vertices around $\affr N_0$ and is called a \emph{$k$-ringnet} or \emph{$k$-net} for short.  
Furthermore, the submesh $\affr N_{i\ldots j}$ consists of $\affr N_i$, \ldots, $\affr N_j$.

The \emph{core} or \emph{core mesh} of $\affr N$ (with respect to $U$) consists of  all vertices influencing $\affr N_0^{(k)}$ for some $k\ge 1$. We denote it by $\affr N_{c}$. Figure~\ref{FIG:Omega} shows some examples.
The $k$-th ring around $\affr N_c$ is denoted by $\affr N_{c.k}$.
The subnet $\affr N_{c.0\ldots k}$ consists of $\affr N_c$ and the next $k$ rings of vertices around $\affr N_c$ and is called a \emph{$c.k$-ringnet} or \emph{$c.k$-net} for short.
Furthermore, the submesh $\affr N_{c.i\ldots j}$ consists of $\affr N_{c.i}$, \ldots, $\affr N_{c.j}$ if $1 \le i \le j$.
\end{definition}

\begin{figure}[htb]
\begin{center}
%{\includegraphics[scale=0.833333]{Omega}}
{\includegraphics{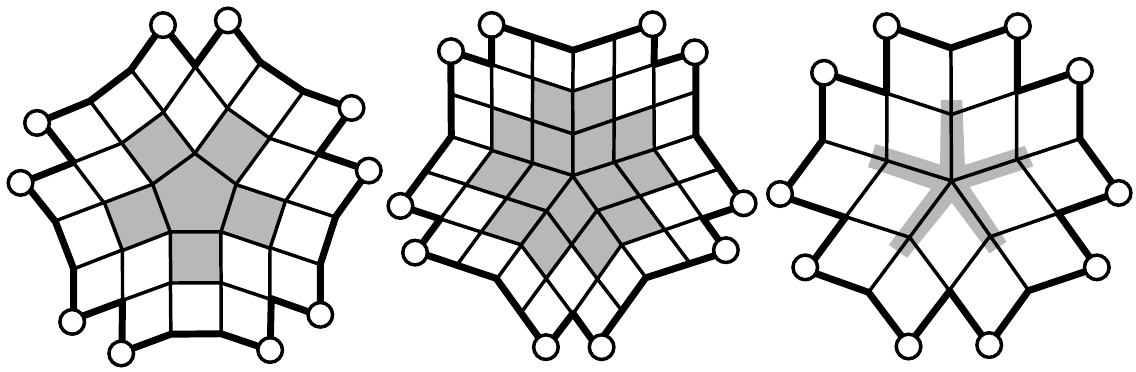}}
\caption{ The core meshes (in gray) and the rings $\affr N_{c.1}$ (bold edges) of a ringnet with an extraordinary face or vertex of valence $5$ for the subdivision schemes $\mathit{VAV}$ (left), $\mathit{AVAV}$ (middle), and ${RV^2R}$ (right). The convex corners of $\affr N_{c.1}$ are marked by hollow dots.}
\label{FIG:Omega}
\end{center}
\end{figure}

Depending on the context, we treat any mesh as a matrix whose rows represent the vertices or as the set of all vertices.

It is straightforward to prove
\begin{lemma}[{\rm Dependence between nets after a subdivision step}]
\label{LEMMA:DependenceU}
\begin{enumerate}
\item[(a)] The subnet of $A \affr N$, $R \affr N$, or $V \affr N$ consisting of all vertices depending on any connected subnet of $\affr N$ is connected.
\item[(b)] $\affr N_{c.0\ldots k}$ determines $\affr N^{(1)}_{c.0\ldots 2k}$ for $k\ge 0$, i.\,e.,
\[\affr N^{(1)}_{c.0\ldots 2k} = (U\;\affr N_{c.0\ldots k})_{c.0\ldots 2k}\;.\]
%\item[(c)] $\affr N_{0\ldots\omega}$ influences $\affr N^{(1)}_{0\ldots\omega+1}$.
\end{enumerate}
\end{lemma}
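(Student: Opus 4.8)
The plan is to treat both statements as purely local, combinatorial facts about the three building blocks $R$, $A$, and $V$, and to obtain the composite statement for $U$ by propagating the single-operator facts through the word $A^{a_s}V^{v_s}R^{r_s}\cdots A^{a_1}V^{v_1}R^{r_1}$. For part (a) I would argue by case analysis on the operator, reading off its stencil from Figure~\ref{FIG:OperatorsRAV}; for part (b) I would combine (a) with two quantitative bookkeeping facts about how each operator transforms ring indices and how far it spreads dependencies.

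For part (a), first record the stencils. A vertex of $A\affr N$ is a face of $\affr N$ and depends on the corners of that face; a vertex of $V\affr N$ is an edge of $\affr N$ and depends on its two endpoints; a vertex of $R\affr N$ is either a retained vertex (depending on itself), an edge point (depending on two endpoints), or a face point (depending on all corners of a face). Given a connected subnet $\affr S$, the claimed set $\affr D$ is the collection of those output vertices whose stencil meets $\affr S$. To see that $\affr D$ is connected I would use a path-lifting argument: the output elements attached to a single vertex $\v c$ of $\affr S$ (all faces or edges incident to $\v c$, together with the corresponding new points) form a connected ``fan'' in $W\affr N$ for $W\in\{A,R,V\}$, since consecutive faces or edges around $\v c$ share an edge or a point of $W\affr N$; and for an edge $\v c\v c'$ of $\affr S$ the two faces (or the shared edge point) along it lie in both fans, joining them. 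Lifting a path in $\affr S$ vertex by vertex then connects any two elements of $\affr D$. The same reasoning, read backwards, gives the dual ``determined-by'' version (output vertices whose entire stencil lies in $\affr S$), which is the form needed in part (b).

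For part (b) I would split the equality into a \emph{completeness} and a \emph{locality} half. Completeness: I claim $U\affr N_{c.0\ldots k}$ reproduces, and agrees with the full mesh $\affr N^{(1)}=U\affr N$ on, at least its first $2k$ rings. This rests on a per-operator ring-multiplication count: refinement $R$ and the mid-edge operator $V$ each \emph{double} the number of face-layers around the core (for $R$ because every face is split into a $2\times 2$ block, and for $V$ because the new face of each old face is interleaved with the new face of each old vertex), whereas the dual operator $A$ preserves the layer count. Hence $U$ multiplies layers by $2^{\,r+v}\ge 2$, the minimum being attained exactly when $r+v=1$; this is the origin of the factor $2$. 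Locality: each basic operator spreads a dependency backward by only a bounded number of layers, so composing through the word shows that every vertex of $\affr N^{(1)}$ within $2k$ rings of the core has its stencil inside $\affr N_{c.0\ldots k}$; equivalently, the vertices of $\affr N$ outside $\affr N_{c.0\ldots k}$ influence only vertices of $\affr N^{(1)}$ beyond ring $2k$. Part (a) is used here to guarantee that all influence and dependency regions involved are genuine connected annuli, so that restricting to ``the first $2k$ rings'' is unambiguous. Combining the two halves and truncating at ring $2k$ yields the asserted equality.

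I expect the main obstacle to be the ring-index bookkeeping in part (b), not part (a), which is a routine check against the diagrams. The subtlety is that the three operators behave heterogeneously---$R$ and $V$ double layers while $A$ does not, $V$ additionally flips the primal/dual orientation, and $A$ swaps faces and vertices---so one must track ``rings'' through a mesh whose type changes along the word and verify that the doubling bound $2^{\,r+v}\ge 2$ and the matching backward-support bound hold \emph{simultaneously} and \emph{uniformly} over all admissible words. Near the extraordinary element the layer structure is distorted by the irregular valence; this is exactly what the definition of the core $\affr N_c$ is designed to absorb, so that outside $\affr N_c$ the mesh is regular and the counts reduce to the $\mathbb Z^2$ case analyzed in Section~\ref{SECTION:Smoothness}. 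Verifying that the core is large enough to contain all the irregularity, for every admissible word, is the one place where genuine care---rather than routine diagram-chasing---is required.
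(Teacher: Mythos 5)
Your sketch for part (a) is fine: the paper itself offers no proof (it merely declares the lemma ``straightforward''), and your fan/path-lifting argument over the stencils of $A$, $R$, $V$ is a correct and natural way to verify connectedness. The problem is in part (b), where your quantitative bookkeeping is wrong at exactly the point you yourself flag as the crux. You claim that $V$ \emph{doubles} the number of layers around the core, giving a total growth factor $2^{\,r+v}$. In fact a single application of $V$ refines only by the factor $\sqrt{2}$ (the scaling factor of the paper is $\sigma=2^{-r-v/2}$, so the correct asymptotic ring-growth factor is $2^{\,r+v/2}$, not $2^{\,r+v}$), and worse, the growth of the \emph{determined ring count} under one $V$ is anisotropic: in the regular part, the rings of $V\affr N$ are $\ell^\infty$-balls in the $45^\circ$-rotated lattice, and along an old spoke direction the midpoint of the edge between the old vertices at ring distances $t-1$ and $t$ lies on ring $t$ of $V\affr N$. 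So one $V$ multiplies the determined ring count by a factor of only about $1$ along spokes (and $2$ along diagonals); only a \emph{pair} of $V$'s, with the frame rotating and rotating back, yields uniform doubling. Concretely, your argument would ``prove'' the lemma for $U=V$, where it is false: the vertex of $V\affr N$ on ring $2k$ along a spoke is the midpoint of an edge whose endpoints lie on rings $2k-1$ and $2k$ of $\affr N$, far outside $\affr N_{c.0\ldots k}$, so $(V\,\affr N_{c.0\ldots k})_{c.0\ldots 2k}\neq \affr N^{(1)}_{c.0\ldots 2k}$. Likewise your factor asserts that $U=V^2$ determines $4k$ rings from $k$ rings, which is impossible since $V^2$ covers the same region at only double resolution.

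What rescues the lemma in the paper is the standing reduction made in Section~\ref{SECTION:BasicObservations} just before Definition~\ref{DEF:Omega}: one passes from $U$ to $U^2$ and assumes without loss of generality that $v=\sum_i v_i$ is \emph{even}. With $v$ even and the constraints $a+v\ge 1$, $v+r\ge 1$, the true growth factor $2^{\,r+v/2}$ is indeed $\ge 2$, and the $V$'s pair up (across any intervening $A$'s and $R$'s, which do not rotate the frame) so that the anisotropy cancels and every direction gains the factor $2$. You never invoke this evenness assumption, and your per-operator doubling claim silently replaces it — that is the genuine gap. A correct writeup needs (i) the direction-dependent growth of a single $V$ and the pairing argument under the $v$-even hypothesis, and (ii) the verification, which you defer but do not carry out, that the $O(1)$ boundary losses of each operator (e.g.\ $A$ determines only faces entirely inside the known region) are absorbed by the core so that the equality holds with the exact factor $2k$ and no additive slack; this is where the operator-dependent size of $\affr N_c$ (cf.\ Figure~\ref{FIG:Omega}) enters.
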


\begin{lemma}[{\rm Dependence property of a core mesh}]
\label{LEMMA:EigenschaftenVonOmega1Netz2}
Let $r\ge 0$. Then, for some constant  $q$ depending on $U$ and $r$, every vertex in $\affr N_c$ influences all vertices in $\affr N^{(q+k)}_{c.0\ldots r}=(U^{q+k}\;\affr N)_{c.0\ldots r}$ for all $k\geq 0$,  which is denoted by 
\[\affr N_{c} \Rrightarrow \affr N^{(q+k)}_{c.0\ldots r}\;.\]
\end{lemma}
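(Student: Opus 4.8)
The plan is to establish the claim by induction on $k$, exploiting the self-similarity of the subdivision process together with the two structural facts already available to us. Lemma~\ref{LEMMA:DependenceU}(b) tells us that one subdivision step doubles the radius of a $c.k$-net, i.\,e. $\affr N^{(1)}_{c.0\ldots 2k} = (U\,\affr N_{c.0\ldots k})_{c.0\ldots 2k}$, while Lemma~\ref{LEMMA:DependenceU}(a) guarantees that the influence of a connected subnet stays connected under $A$, $R$, and $V$. The goal is to show that after finitely many steps the influence of the core $\affr N_c$ has spread over the entire $r$-net, and that once it has done so it continues to cover the $r$-net for all subsequent steps.

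First I would pin down the base case $r=0$: by the very definition of the core (Definition~\ref{DEF:Omega}), every vertex of $\affr N_c$ influences $\affr N_0^{(k)}$ for some $k\ge 1$, so choosing $q$ large enough that this holds simultaneously for the finitely many vertices of $\affr N_c$ gives $\affr N_c \Rrightarrow \affr N^{(q)}_{c.0}$. The point I would then stress is \emph{persistence}: because the extraordinary element is isolated and each of $A$, $R$, $V$ maps a vertex to vertices that depend on it, once $\affr N_c$ influences $\affr N^{(q)}_{c.0}$ it influences $\affr N^{(q+k)}_{c.0}$ for every $k\ge 0$. This handles the case $r=0$ and supplies the mechanism for propagating influence outward.

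For the inductive step I would argue that widening the target ring from radius $r$ to radius $r+1$ costs only a bounded number of extra subdivision steps. Suppose $\affr N_c \Rrightarrow \affr N^{(q+k)}_{c.0\ldots r}$ for all $k\ge 0$. Applying one more subdivision step and invoking Lemma~\ref{LEMMA:DependenceU}(b), the influenced region doubles in radius, so the radius-$(r+1)$ net is reached after one additional step; connectedness from Lemma~\ref{LEMMA:DependenceU}(a) ensures no gaps appear, and the persistence argument from the base case keeps the larger net covered thereafter. Iterating this finitely many times reaches any prescribed finite radius $r$, absorbing the extra steps into a new constant $q$ depending on $U$ and $r$.

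The main obstacle I anticipate is making the doubling mechanism do genuine work rather than merely preserve what is already covered: one has to verify that the influence actually reaches the newly added outermost ring $\affr N_{c.r+1}$, not just that the previously covered rings remain covered after rescaling. This requires checking that along the boundary of the currently influenced region at least one vertex of $\affr N_c$ propagates its influence one ring further under $U$ — a local statement about the stencils of $A$, $R$, and $V$ near the convex corners of the core (the hollow dots in Figure~\ref{FIG:Omega}). Handling these corners uniformly across the three operators, and across the primal/dual alternation, is the delicate part; everything else is bookkeeping with the radius-doubling identity.
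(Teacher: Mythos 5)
Your overall plan---extract influence on the extraordinary element from the definition of the core, make it persist, and then spread it outward over a bounded number of extra steps absorbed into $q$---is the same as the paper's, but your central mechanism has a genuine gap. Lemma~\ref{LEMMA:DependenceU}(b) is a \emph{determination} statement: it says the stencils of the vertices of $\affr N^{(1)}_{c.0\ldots 2k}$ are \emph{contained} in $\affr N_{c.0\ldots k}$, i.\,e., it is an upper bound on dependence, and by itself it cannot deliver the lower bound you need (``the influenced region doubles''). Worse, it degenerates exactly where growth must start: for $k=0$ it reads $\affr N^{(1)}_{c.0\ldots 0}=(U\,\affr N_{c.0\ldots 0})_{c.0\ldots 0}$, so the doubling mechanism never pushes influence out of the core into $\affr N_{c.1}$ at all. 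Your claim that ``the radius-$(r{+}1)$ net is reached after one additional step'' is likewise unsupported: Equations~(\ref{EQ:B}) and (\ref{EQ:A}) and the lower triangular form of $S$ show that a new convex corner in $\affr N_a^{(1)}$ draws its stencil from $\affr N_c$, $\affr N_b$, $\affr N_a$, and nothing you cite rules out that this stencil avoids the core entirely in a single step (for $v=0$ the block $A$ is even nonzero, so corners feed on themselves). You correctly flag exactly this corner problem in your last paragraph as ``the delicate part,'' but it is not bookkeeping---it is the content of the lemma, and you leave it open. There is also a slip in the base case: Definition~\ref{DEF:Omega} yields influence of each core vertex on $\affr N_0^{(k)}$, the extraordinary element, whereas $\affr N^{(q)}_{c.0}=\affr N^{(q)}_c$ is the whole core of the subdivided net, which is in general strictly larger (see Figure~\ref{FIG:Omega}), so even the case $r=0$ already requires the spreading argument you postpone to the induction.

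The paper's proof sidesteps all of this by measuring rings around $\affr N_0$ rather than around $\affr N_c$: choose $\alpha$ such that $\affr N_{0\ldots\alpha}$ contains $\affr N_{c.0\ldots r}$, which is possible because the core plus $r$ rings lies within a fixed number of rings of the extraordinary element, combinatorially the same at every subdivision level. The definition of the core together with persistence gives an $l$ such that every vertex of $\affr N_c$ influences all of $\affr N_0^{(l+k)}$ for all $k\ge 0$; influence then spreads outward by one ring per step, covering all of $\affr N_{0\ldots 1}^{(l+k+1)}$, \ldots, all of $\affr N_{0\ldots\alpha}^{(l+k+\alpha)}\supseteq\affr N_{c.0\ldots r}^{(l+k+\alpha)}$, whence $q=l+\alpha$. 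Growth by one ring per step around a \emph{fully influenced} extraordinary element is robust because all subdivision weights are nonnegative, so positivity composes along dependence chains and no cancellation can destroy influence, and no case analysis at the convex corners of the core rings is needed. If you want to keep your formulation, you must replace ``doubling'' by the observation that every vertex of $\affr N^{(1)}_{c.0\ldots 2k}$ has a nonempty nonnegative stencil inside $\affr N_{c.0\ldots k}$ for $k\ge 1$, and supply a separate finite-step argument moving influence from the core through $\affr N_{c.1}$---at which point you will have reconstructed the paper's argument in less convenient coordinates.
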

\begin{proof}
{
Let $\alpha$ be such that $\affr N_{0\ldots \alpha}$ contains $\affr N_{c.0\ldots r}$. For sufficiently large $l$ and any $k\ge 0$, every vertex in $\affr N_{c}$ influences all vertices in $\affr N_0^{(l+k)}$, all vertices in $\affr N_{0\ldots 1}^{(l+k+1)}$, \ldots, and all vertices in $\affr N_{0\ldots \alpha}^{(l+k+\alpha)}$ and hence also all vertices in $\affr N_{c.0\ldots r}^{(l+k+\alpha)}$. Thus, we obtain the lemma with $q = l+\alpha$.
}
\qquad
\end{proof}

As explained above, every ringnet $\affr N^{(k)}$ defines a surface $\v s_k$. Since $\v s_{k+1}$ contains $\v s_k$, we can consider the difference surface $\v r_k = \v s_{k+1} \backslash \v s_k$ whose control points are contained in a sufficiently large subnet $\affr N^{(k)}_{c.0\ldots\rho}$ with $\rho$ not depending on $k$.
Due to Lemma~\ref{LEMMA:DependenceU} (b), the operator $U$ restricted to $c.\rho$-nets can be represented by a stochastic matrix $S$ called the \emph{subdivision matrix}, i.\,e.,
\begin{equation}\label{EQ:SubMatrix}
\affr N_{c.0\ldots\rho}^{(k+1)} =  S\,\affr N_{c.0\ldots\rho}^{(k)}\;.
\end{equation}

\begin{theorem}[{\rm $C^0$-property of $U$}]
\label{SATZ:C0}
The subdivision surfaces generated by $U$ are $C^0$ continuous.  
\end{theorem}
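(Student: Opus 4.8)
The plan is to prove convergence by showing that the surfaces $\v s_k$ form a uniform Cauchy sequence, so that their limit $\v s$ is continuous; away from the extraordinary point each $\v s_k$ is already $C^1$ by Theorem~\ref{SATZ:C1REGULAR}, and the only real issue is the behaviour at the extraordinary point. Since the control points of the difference surface $\v r_k=\v s_{k+1}\setminus\v s_k$ all lie in the net $\affr N_{c.0\ldots\rho}^{(k)}$ and each surface piece lies in the convex hull of its control points, $\Vert\v r_k\Vert_\infty$ is bounded by the diameter $d_k$ of $\affr N_{c.0\ldots\rho}^{(k)}$, i.e.\ by the largest distance between two of its vertices. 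Thus everything reduces to proving that $d_k\to 0$ geometrically, which also shows that the nested annuli $\v r_k$ shrink onto a single limit point $\v p$ and hence that $\v s$ is continuous at the extraordinary point.

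First I would record that the subdivision matrix $S$ of Equation~(\ref{EQ:SubMatrix}) is row-stochastic: the operators $R$, $A$, $V$ form convex combinations with nonnegative weights and reproduce constants, so $S\mathbf 1=\mathbf 1$ and $S$ has nonnegative entries. For such a matrix the diameter is a contracting seminorm, $d_{k+1}=\mathrm{diam}\,(S\,\affr N_{c.0\ldots\rho}^{(k)})\le \tau(S)\,d_k$, with contraction factor the Dobrushin coefficient $\tau(S)=\tfrac12\max_{i,i'}\sum_l|S_{il}-S_{i'l}|\le 1$. This is the exact analogue, for extraordinary meshes, of the difference-scheme contraction used in Theorem~\ref{SATZ:C1Condition} for regular meshes, so the goal becomes $\tau(S^N)<1$ for some $N$.

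The main step, and the step I expect to be the obstacle, is to establish this strict contraction without computing the spectrum of $S$ numerically. Here I would invoke Lemma~\ref{LEMMA:EigenschaftenVonOmega1Netz2}: after finitely many steps every vertex of the core $\affr N_c$ influences every vertex of $\affr N_{c.0\ldots\rho}^{(q+k)}$, and because the three basic operators couple connected vertices with strictly positive weights, ``influences'' translates into a strictly positive entry of the corresponding power of $S$. A power $S^N$ therefore has a column bounded below by some $\delta>0$, which is exactly a Doeblin/minorization condition and forces $\tau(S^N)\le 1-\delta<1$. Consequently $d_{k}\le C\,(1-\delta)^{\lfloor k/N\rfloor}$ decays geometrically. (Equivalently, $S^N$ is primitive and Perron--Frobenius makes $1$ the unique, simple eigenvalue of modulus one, with the subdominant eigenvalue governing the rate.) The delicate point is precisely the verification of this strict positivity: the averaging operators are only non-expansive in general, so without the full-dependence property the diameters need not tend to zero.

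Finally, combining the two estimates gives $\Vert\v r_k\Vert_\infty\le d_k\le C\,(1-\delta)^{\lfloor k/N\rfloor}$, hence $\sum_j\Vert\v r_j\Vert_\infty<\infty$ and the telescoping sequence $\v s_k$ converges uniformly to a continuous limit $\v s=\v s_0+\sum_{j\ge0}\v r_j$. The reduction to even $v$ and the resulting orientation-preserving, stationary net structure arranged in Section~\ref{SECTION:BasicObservations} are what guarantee that the single matrix $S$ indeed governs every step $\affr N_{c.0\ldots\rho}^{(k)}\mapsto\affr N_{c.0\ldots\rho}^{(k+1)}$, so that iterating $S$ and invoking the dependence lemma is legitimate.
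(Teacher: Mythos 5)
Your proposal is correct and follows essentially the same route as the paper: the paper likewise observes that the subdivision matrix $S$ is stochastic, uses Lemma~\ref{LEMMA:EigenschaftenVonOmega1Netz2} to obtain a positive column in a power $S^l$, and concludes that the rings $\v s_{i}\setminus\v s_{i-1}$ contract to a point. The only difference is cosmetic: where you make the contraction quantitative via the Dobrushin coefficient, the paper simply cites \cite[Theorem 2.1]{MP89a} for the convergence of $(S^i\v c)$ to a multiple of $[1\,\ldots\,1]^{\mathrm t}$, which your minorization argument in effect reproves.
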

\begin{proof}
{
Since the subdivision matrix $S$ is stochastic, i.\,e., $S$ is a non-negative and real matrix and each row of $S$ sums to $1$, $1$ is the dominant eigenvalue of $S$.  
Due to Lemma~\ref{LEMMA:EigenschaftenVonOmega1Netz2}, there is an integer $l\ge 1$ such that 
\[\affr N_{c} \Rrightarrow \affr N^{l}_{c.0\ldots \rho} = S^l\, \affr N_{c.0\ldots \rho} \;.\]
This implies that $S^l$ has a positive column and, according to \cite[Theorem 2.1]{MP89a}, any sequence $(S^i \, \v c)$ converges to a multiple of the vector $[1\,\ldots\, 1]^\mathrm{t}$ as $i \to \infty$ for all real vectors $\v c$.
Therefore, the only dominant eigenvalue of $S$ is $1$ and it has algebraic multiplicity $1$.

Hence, the difference surfaces $\v s_i \backslash \v s_{i-1}$ converge to a point and the surfaces generated by $U$ are continuous.
}
\quad
\end{proof}

To analyze the spectrum of the subdivision matrix $S$, we order any $c.\rho$-net $\affr N$  such that
\[\affr N = \left[ \begin{array}{l} \affr N_{c} \\ \affr N_b \\ \affr N_a \\ \affr N_{c.2} \\ \vdots \\ \affr N_{c.\rho}\end{array} \right]\;,\]
where $\affr N_a$ consists of the convex corners and $\affr N_b$ of all other points in $\affr N_{c.1}$ (see Figure~\ref{FIG:Omega} for an illustration of the convex corners). 
With this arrangement, the subdivision matrix $S$ has the lower triangular form
\[S = \left[ 
\begin{array}{cccccc} \M C\\ * & \M B\\ * & * & \M A \\  * & * & * & 0\\
\vdots & & & \ddots & \ddots\\
  * & \hdots & \hdots & \hdots & * & 0
\end{array}
\right]\;,\]
where
\begin{eqnarray}
\affr N_c^{(1)} &=& C \,\affr N_c\;, \label{EQ:C}\\
\affr N_b^{(1)} &=&  \left[ \begin{array}{cc} * & B \end{array} \right]
\left[ \begin{array}{l} \affr N_{c} \\ \affr N_b \end{array} \right],\quad \mbox{and} \label{EQ:B}\\
\affr N_a^{(1)} &=& \left[ \begin{array}{ccc} * & * & A \end{array} \right] 
\left[ \begin{array}{l} \affr N_c \\ \affr N_b \\ \affr N_a \end{array} \right]\;.\label{EQ:A}
\end{eqnarray}

To verify this, we recall that any point influencing the core mesh influences $\affr N_0$ and thus belongs to the core mesh. This implies Equation~(\ref{EQ:C}) and shows that $\affr N_{c.1}$ influences only points in $(U\affr N)_{c.1\ldots \infty}$. Hence, $\affr N_{c.2}$ influences only points in $(U\affr N)_{c.2\ldots \infty}$, etc. 
We observe that $\affr N_a$ does not influence any point in $\affr N_b^{(1)}$ and if $U$ has a factor $V$, then $\affr N_a$ does not even influence $\affr N_a^{(1)}$ (see  Figure~\ref{FIG:B1B2}). Thus, Equations~(\ref{EQ:B}) and (\ref{EQ:A}) follow.
Moreover, due to Lemma~\ref{LEMMA:DependenceU}\,(b), $(U \affr N)_{c.2}$ is determined by $(U \affr N)_{c.0\ldots 1}$, and $(U \affr N)_{c.3}$ is determined by $(U \affr N)_{c.0\ldots 2}$, etc. This implies the zero blocks in $S$.

Hence, the eigenvalues of $S$ are zero or are the eigenvalues of the blocks $\M C$, $\M B$, and  $\M A$.

\begin{lemma}[{\rm Spectral radii of $B$ and $A$}]
\label{LEMMA:ZeilensummeVonB}
The spectral radii $\rho_{B}$ and $\rho_{A}$ of $B$ and $A$ satisfy 
\[\rho_{B} \leq 2^{-r-a-v} \quad \mbox{and} \quad \rho_{A} \leq
\left\{ \begin{array}{lll} 4^{-r-a} &\mbox{if}& v = 0 \\ 0 &\mbox{if}& v>0 \end{array} \right.\;.\]
In particular,  $\rho_B, \rho_A \le \sigma(U)=2^{-r-v/2}$.
\end{lemma}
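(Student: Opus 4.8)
The plan is to bound the spectral radii of the blocks $B$ and $A$ by bounding the norms of these matrices, exploiting the fact that each row of the subdivision matrix $S$ is nonnegative and sums to $1$ (stochasticity), together with the specific structure of the three basic operators. Since $\rho_B \le \Vert B\Vert_\infty$ and $\rho_A \le \Vert A\Vert_\infty$ for the maximum absolute row sum norm, it suffices to estimate how much ``weight'' the rows of $B$ and $A$ carry. The key point is that $B$ governs the self-influence of $\affr N_b$ (the non-corner part of the first ring $\affr N_{c.1}$) and $A$ governs the self-influence of $\affr N_a$ (the convex corners), after the contributions of the core $\affr N_c$ have been split off into the $*$-blocks of $S$.

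\textbf{First} I would analyze each of the three basic operators $R$, $A$, $V$ separately and track, for a point in $\affr N_a$ or $\affr N_b$ after one application, what fraction of its defining convex combination comes from $\affr N_a$, respectively $\affr N_a \cup \affr N_b$, as opposed to from the core. The refinement operator $R$ reproduces vertices and inserts face/edge points as averages, the averaging operator $A$ replaces each vertex by the average of adjacent face centers, and each application contributes a factor $1/2$ to the distance-weighting toward the core; an application of $V$ contributes $1/2$ as well but additionally, because $V$ changes orientation and connects edge midpoints, a convex corner receives \emph{no} contribution from the old convex-corner region (this is exactly the observation already made in the text, with reference to Figure~\ref{FIG:B1B2}, that if $U$ has a factor $V$ then $\affr N_a$ does not influence $\affr N_a^{(1)}$). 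This last fact is what forces $\rho_A = 0$ when $v>0$: the $A$-block becomes nilpotent.

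\textbf{Then} I would assemble the per-operator factors multiplicatively across the full product $U = A^{a_s}V^{v_s}R^{r_s}\cdots A^{a_1}V^{v_1}R^{r_1}$. Each of the $r$ refinement steps, $a$ averaging steps, and $v$ mid-edge steps contributes a factor $1/2$ to the row-sum weight that stays within $\affr N_b$ (away from the core), giving $\rho_B \le 2^{-r-a-v}$. For $A$, when $v=0$ the convex corners are governed purely by $A$ and $R$ steps, and a geometric/combinatorial count of how corner influence decays gives the stronger bound $\rho_A \le 4^{-r-a}$; when $v>0$ the corner self-influence vanishes as noted, giving $\rho_A = 0$. The concluding inequality $\rho_B,\rho_A \le \sigma(U) = 2^{-r-v/2}$ is then a routine comparison of exponents: $2^{-r-a-v} \le 2^{-r-v/2}$ holds since $a\ge 0$ and $v\ge v/2$, and $4^{-r-a} = 2^{-2r-2a} \le 2^{-r-v/2}$ holds in the case $v=0$ since then $-2r-2a \le -r$.

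\textbf{The main obstacle} I expect is making the row-sum bookkeeping rigorous and uniform across all valences and all admissible exponent patterns $(a_i, v_i, r_i)$ simultaneously, rather than operator-by-operator for a fixed scheme. In particular, the claim that every $R$, $A$, or $V$ step strictly halves the weight retained away from the core requires verifying that the convex-corner and ring structure is preserved under each operator and that no row of $B$ or $A$ accidentally retains full weight; this is where the precise definition of $\affr N_a$ as the convex corners (where the ring has an outward-pointing corner, so fewer neighbors lie outside the core) does the work, since it limits how many outside neighbors can feed back into a corner. The orientation-reversal behavior of $V$ also needs care to confirm it genuinely kills, rather than merely reduces, the corner self-coupling, which is the crux of the $\rho_A = 0$ assertion for $v>0$.
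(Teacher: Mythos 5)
Your proposal follows essentially the same route as the paper's proof: both bound $\rho_B$ and $\rho_A$ by the maximal row sum of the nonnegative blocks (computed by setting $\affr N_b$, resp.\ $\affr N_a$, to $1$ and everything else to $0$), track a factor of $1/2$ (resp.\ $1/4$) per basic operator on the non-corner (resp.\ corner) vertices of the inner boundary of the band ring influenced by $\affr N_{c.1}$, and invoke the observation illustrated in Figure~\ref{FIG:B1B2} that a factor $V$ cuts off the convex corners' self-influence to obtain $\rho_A=0$ when $v>0$. The only cosmetic difference is that the paper, via the decomposition $U=U_1VU_2$ with $U_2$ a product of $A$'s and $R$'s, concludes that the $A$-block is actually the zero matrix rather than merely nilpotent as you phrase it, but either way $\rho_A=0$ follows.
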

\begin{proof}
Since $B$ is non-negative, we get \cite[Corollary 6.1.5 on Page 346]{HJ85}
\[\rho_{B} \le \Vert B \Vert_\infty = \Vert B \v 1\Vert_\infty, \quad \mbox{where} \quad \v 1 := [1\;\ldots\;1]^{\mathrm{t}}\;.\]
The vector $B \v 1$ represents $\affr N_{c.1}^{(1)}$ without the convex corners if $\affr N_{c} = 0$, $\affr N_b = 1$, $\affr N_a = 0$, and $\affr N_{c.2\ldots\rho} = 0$. 
Due to Lemma~\ref{LEMMA:DependenceU} (a), the vertices influenced by $\affr N_{c.1}$ under $A$, $R$, and $V$ form a connected subnet of $A \affr N$, $R \affr N$, and $V \affr N$ respectively with an inner boundary whose vertices that do not form a convex corner have a value $\le 1/2$. By induction, $\rho_{B}\le 2^{-r-a-v}$ follows.

Similarly, we get $\rho_{A} \le 4^{-r-a}$ if $v=0$.

If $v>0$, then decompose $U$ into $U_1VU_2$ where $U_2$ is the identity operator or a sequence of the operators $A$ and $R$. 
Due to Lemma~\ref{LEMMA:DependenceU} (a), the vertices influenced by $\affr N_{c.1}$ form band rings with inner boundaries $\affr B_1$ in $U_2 \affr N$ and $\affr B_2$ in $VU_2\affr N$. For $\affr N_{c} = 0$, we get that $\affr N_{c.1}^{(1)}$ is determined by $\affr B_2$ which is determined by $\affr B_1$ excluding its convex corners (see Figure~\ref{FIG:B1B2}). Since the convex corners of $\affr N_{c.1}$ have no influence on $\affr B_1$ excluding its convex corners, we obtain $\rho_{A} = 0$.
\begin{figure}[htb]
\begin{center}
%{\includegraphics[scale=0.833333]{B1B2}}
{\includegraphics{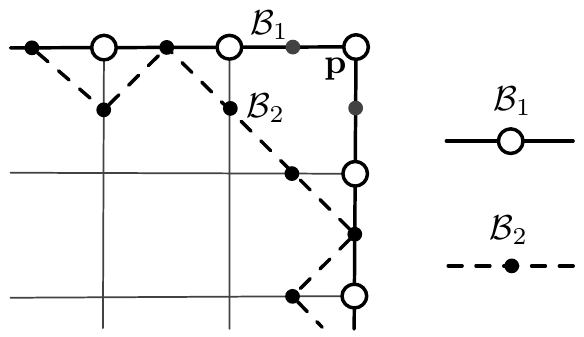}}
\caption{${\cal B}_1$ and ${\cal B}_2$ are the inner boundaries of the band rings in $U_2 {\cal N}$ and  in $VU_2{\cal N}$ that are influenced by ${\cal N}_{c.1}$. Note that the convex corner $\mathbf p$ of ${\cal B}_1$ has no influence on $\affr B_2$.}
\label{FIG:B1B2}
\end{center}
\end{figure}
\qquad
\end{proof}

%%%%%%%%%%%%%%%%%%%%%%%%%%%%%%
\section{Symmetric ringnets}\label{SECTION:ReviewGeometricMethod2}

For the $C^1$ analysis of general midpoint subdivision, we need to investigate the eigenvectors and eigenvalues of the subdivision matrix $S$. We do this by subdividing special grid meshes as in \cite{PC2011} and recall the basic definitions in this and the next section.

\begin{definition}[{\rm Grid mesh}]
\label{DEF:Netzstruktur}
A \emph{primal grid mesh} of valence~$m$ and frequency~$f$ is a  planar primal ringnet with the vertices  
%\[\v g_{ij}^l = i e^{\I 2 \pi lf/m} + j e^{\I 2 \pi (l+1)f/m}, \quad i,j\geq 0,\; l\in \mathbb Z_m,\; \I = \sqrt{-1}\;.\]
\[\v g_{ij}^l = \left[\begin{array}{l} \mathrm{Re}(g_{ij}^l)\\ \mathrm{Im}(g_{ij}^l)\end{array} \right]  \in \mathbb R^2\;,\]
where $g_{ij}^l = i e^{\I 2 \pi lf/m} + j e^{\I 2 \pi (l+1)f/m} \in \mathbb C$ and $i,j\geq 0,\; l\in \mathbb Z_m,\; \I = \sqrt{-1}$.\\
A \emph{dual grid mesh} of valence~$m$ and frequency~$f$ consists of the vertices
\[\v h_{ij}^l = \frac{1}{4}(\v g_{i-1,j-1}^{l} + \v g_{i,j-1}^{l} + \v g_{i-1,j}^{l} + \v g_{i,j}^{l}), \quad i,j\geq 1,\; l\in \mathbb Z_m\;\]
(see Figure~\ref{Abb:SterngitterStruktur}).  
For fixed $l$, the vertices $\v g_{ij}^l$ or $\v h_{ij}^l$ with $(i,j)\neq (0,0)$ of a grid mesh $\affr N$ build the $l$-th \emph{segment} of $\affr N$. 
The \emph{segment angle} of  $\affr N$  is $\varphi = 2 \pi f/m$.
The half-line from the center $\v g_{00}^{l}$ through $\v g_{10}^{l}$ is called the $l$-th \emph{spoke}, denoted by $S_l(\affr N)$ or $S_l$ for short.
\end{definition}

\begin{figure}[htb]
\begin{center}
%{\includegraphics[scale=0.833333]{GridMesh.pdf}}
{\includegraphics{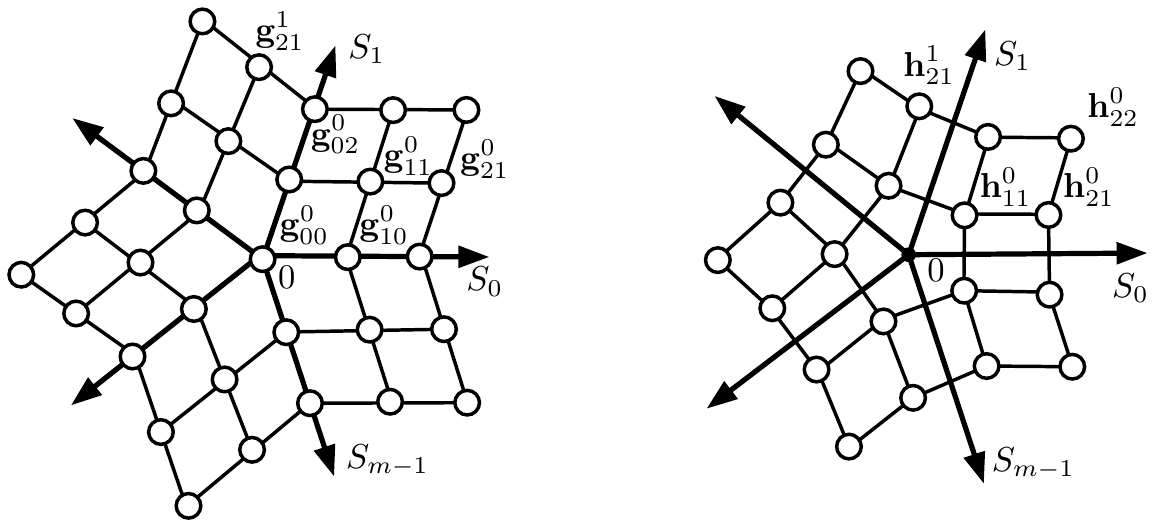}}
\caption{A primal grid mesh (left) and a dual grid mesh (right) with valence $5$ and frequency $1$.}
\label{Abb:SterngitterStruktur}
\end{center}
\end{figure}

Topologically, any ringnet $\affr M$ is equivalent to a grid mesh $\affr N$. Therefore, we use the same indices for equivalent vertices and denote the vertices of $\affr M$ by $\v p_{ij}^l$.  

\begin{definition}[{\rm Symmetric ringnet}]
A planar ringnet of valence~$m$ with vertices $\v p_{ij}^l$ in $\mathbb R^2$ is called \emph{rotation symmetric} with frequency $f$, if
%\[\v p_{ij}^{l+1} = \v p_{ij}^l e^{\I 2 \pi f/m}\;.\]
\[\v p_{ij}^{l+1} = \left[\begin{array}{rr} \cos \theta & -\sin \theta\\ \sin \theta & \cos \theta\end{array} \right] \v p_{ij}^l\quad \mbox{with} \quad \theta = 2 \pi f/m\;.\]
A planar ringnet $\affr N \in \mathbb R^2$ is called \emph{reflection symmetric} if its permutation $\widetilde{\affr N}$ consisting of the points $\widetilde{\v p}_{ij}^l := \v p_{ji}^{(m-1)-l}$ equals the conjugate ringnet $\overline{\affr N}$ consisting of the points 
\[\overline{\v p_{ij}^l} = \overline{\left[\begin{array}{r} p_{ij, x}^l\\ p_{ij, y}^l \end{array}\right]} = \left[\begin{array}{r} p_{ij, x}^l\\ -p_{ij, y}^l \end{array}\right]\;,\]
i.\,e.,  
$\widetilde{\affr N} = \overline{\affr N}\;.$
A rotation and reflection symmetric ringnet is called \emph{symmetric}.
\end{definition}

%%%%%%%%%%%%%%%%%%%%%%%%%%%%%%
\section{The characteristic mesh}\label{SECTION:ChapMap}
In this section, we construct a characteristic map of a general midpoint subdivision scheme 
\[U=A^{a_s} V^{v_s} R^{r_s} A^{a_{s-1}} V^{v_{s-1}} R^{r_{s-1}}  \cdots A^{a_1} V^{v_1} R^{r_1}\;,\]
where $\sum_{i=1}^s a_i + \sum_{i=1}^s v_i \ge 1$, $\sum_{i=1}^s r_i + \sum_{i=1}^s v_i \ge 1$,  and $\sum_{i=1}^s v_i$ is even.
We follow \cite{PC2011} and use results stated there for midpoint subdivision that are also valid for general midpoint subdivision since their proofs are only based on the properties that the subdivision scheme preserves symmetry and generates midpoints or any convex combinations.

\begin{theorem}[{\rm $\affr M_\infty$ and $\lambda_\varphi$}]
\label{SATZ:SUBDOMINANCE1}
Let $\affr M$ be the core mesh of a grid mesh with frequency $f$ and segment angle
\[ \varphi := \frac{2f \pi}{m} \in (0, \pi)\;.\]
Let
%\[\affr M_k := \frac{(U^k \affr M)_c}{\Vert (U^k \affr M)_c\Vert}\;,\]
$\affr M_k := \left(U^k \affr M\right)_c / \left\Vert \left(U^k \affr M\right)_c\right\Vert\;,$
where $\Vert\cdot\Vert$ denotes any matrix norm. Then the following statements hold:
\begin{enumerate}
\item[(a)] The sequence $(\affr M_k)_{k \in \mathbb N}$ converges to a symmetric eigennet $\affr M_\infty$ with segment angle $\varphi$ and a positive eigenvalue $\lambda_\varphi$, which depends only on $\varphi$ but not on $f$ and $m$. 
$(\affr M_\infty)_{0\ldots 1}$ has at least one non-zero edge.
Additionally, we define $\lambda_\pi = |\gamma_\pi|$, where $\gamma_\pi$ is the maximum eigenvalue associated with a rotation symmetric eigenvector with segment angle~$\pi$.
\item[(b)] Restricting $U$ to the core meshes, the eigenvalue $\lambda_\varphi$ is the dominant eigenvalue of the eigenspaces of frequencies $f$ and $m-f$ and it has geometric and algebraic multiplicity $2$.
\item[(c)] $\lambda_\alpha > \lambda_\theta > \lambda_\pi \quad \mbox{for} \quad 0 < \alpha < \theta < \pi$.
%\item[(d)] $\lambda_{\pi/2} = \sigma(U)$, where $\sigma(U)$ is the scaling factor of $U$ defined in Equation~\ref{EQ:SCALING}.
\end{enumerate}
\end{theorem}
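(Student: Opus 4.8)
The plan is to follow the route of \cite{PC2011}, reducing the spectral problem to a single frequency block and then combining a Perron--Frobenius argument with the factorization of $U$ into the basic operators. First I would exploit the rotation symmetry. Since $U$ is local and rotation-equivariant, its restriction to the core meshes commutes with the cyclic shift $l \mapsto l+1$ of the $m$ segments, so a discrete Fourier transform over $\mathbb Z_m$ block-diagonalizes this restriction into finite matrices $C_\varphi$, one for each frequency $f$. Each $C_\varphi$ acts on the data of a single segment, and its entries depend on $f$ and $m$ only through the twist $e^{\I\varphi}$ by which neighboring segments enter, i.\,e., only through $\varphi = 2\pi f/m$. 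This is exactly why any eigenvalue extracted from $C_\varphi$ depends only on $\varphi$ and not on $f$ or $m$ separately, which is part of claim (a). A real grid mesh of frequency $f$ splits into complex-conjugate frequency-$f$ and frequency-$(m-f)$ parts, and I would track the frequency-$f$ part as a complex vector in the domain of $C_\varphi$.

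Next I would identify the dominant eigenvalue through the decomposition $U = A^{a_s}V^{v_s}R^{r_s}\cdots A^{a_1}V^{v_1}R^{r_1}$. On the frequency-$\varphi$ grid data each basic operator acts by its symbol on grid meshes: $R$ contributes the rescaling, while $A$ and $V$ contribute factors built from $\cos(\varphi/2)$-type quantities that are strictly positive on $(0,\pi)$, and the product of these symbols yields $\lambda_\varphi$. Positivity, simplicity, and dominance of $\lambda_\varphi$ is the crux. Standard Perron--Frobenius does not apply directly to the complex block $C_\varphi$, so I would pass to the real two-dimensional span of the $\pm f$ modes, where the resulting operator is nonnegative because $U$ forms only convex combinations. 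By Lemmas~\ref{LEMMA:DependenceU} and \ref{LEMMA:EigenschaftenVonOmega1Netz2}, after finitely many steps every core vertex influences every other, so this nonnegative operator is primitive and hence has a simple, strictly positive dominant eigenvalue, namely $\lambda_\varphi$. Primitivity also forces the Perron eigenvector to be nonzero on the first ring, which gives the claim that $(\affr M_\infty)_{0\ldots 1}$ has a nonzero edge.

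With a simple dominant eigenvalue in hand, claim (a) follows from power iteration: since the grid mesh has a nonzero component along the dominant eigennet, the normalized iterates $\affr M_k = (U^k\affr M)_c/\Vert(U^k\affr M)_c\Vert$ converge to $\affr M_\infty$, the normalization absorbing the factor $\lambda_\varphi^k$. Because the scheme preserves both rotation and reflection symmetry and only forms convex combinations, symmetry is inherited at each step and the limit $\affr M_\infty$ is symmetric with segment angle $\varphi$. For claim (b), the conjugate blocks $C_\varphi$ (frequency $f$) and $\overline{C_\varphi}$ (frequency $m-f$) share the real eigenvalue $\lambda_\varphi$, so the real eigenspace has geometric and algebraic multiplicity $2$, and the per-block dominance from Perron--Frobenius upgrades to dominance over the whole frequency pair. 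For claim (c), I would read the monotonicity off the explicit product form of $\lambda_\varphi$: each $\cos(\varphi/2)$-type factor is strictly decreasing on $(0,\pi)$, so $\lambda_\varphi$ is strictly decreasing in $\varphi$, which is precisely $\lambda_\alpha > \lambda_\theta > \lambda_\pi$ for $0 < \alpha < \theta < \pi$, with $\lambda_\pi = |\gamma_\pi|$ as the limiting endpoint value.

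The hard part will be establishing, uniformly across the entire class of general midpoint operators, i.\,e., for all interleavings of $A$, $V$, and $R$, that the frequency block admits a real, positive, simple, and strictly dominant eigenvalue for every $\varphi \in (0,\pi)$: the complex entries of $C_\varphi$ obstruct a direct Perron--Frobenius argument, and the fix through the real $\pm f$ span requires verifying primitivity via the influence lemmas in a way that does not hinge on the particular order of the factors. Securing a product formula for $\lambda_\varphi$ clean enough to make the monotonicity in (c) transparent for the whole class, rather than for a single fixed scheme, is the second delicate point.
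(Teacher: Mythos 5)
Your reduction to frequency blocks and the observation that the block depends only on the twist $e^{\I\varphi}$ (hence $\lambda_\varphi$ depends only on $\varphi$) do match the spirit of the paper, which proves this theorem simply by re-running (5.4), (5.7), (6.3), and (6.4) of \cite{PC2011}, noting that those proofs use only symmetry preservation and convex combinations. But your central step is false: the restriction of $U$ to frequency-$f$ rotation symmetric core meshes has, in segment coordinates, the form $N_0 + e^{\I\varphi}N_+ + e^{-\I\varphi}N_-$ with nonnegative real $N_0, N_\pm$, and in the real two-dimensional span of the $\pm f$ modes the twist $e^{\pm\I\varphi}$ becomes a rotation block with entries $\mp\sin\varphi$ (and $\cos\varphi<0$ for $\varphi>\pi/2$). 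So this real operator is \emph{not} a nonnegative matrix, primitivity in the orthant sense is unavailable, and Perron--Frobenius cannot be invoked to get a positive, simple, dominant $\lambda_\varphi$ or a nonzero edge in $(\affr M_\infty)_{0\ldots 1}$. The influence Lemmas~\ref{LEMMA:DependenceU} and \ref{LEMMA:EigenschaftenVonOmega1Netz2} give a positive column of the full stochastic matrix $S$ --- which is exactly what Theorem~\ref{SATZ:C0} uses at frequency $0$ --- but positivity does not descend to the nonzero-frequency blocks. This obstruction is precisely why \cite{PC2011} develops bespoke geometric (cone and contraction) arguments on the complex segment data instead of matrix-positivity arguments; your proposal flags this as ``the hard part'' but resolves it with a premise that does not hold.

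The second gap is the ``explicit product form'' of $\lambda_\varphi$ on which you base positivity, dominance, and all of claim (c). Symbols multiply only for shift-invariant operators on regular grids; at an extraordinary element the frequency block is a genuine matrix, and the spectral radius of a product of such blocks is not the product of per-factor $\cos(\varphi/2)$-type symbols. Indeed, no closed form for $\lambda_{2\pi/m}$ is known even for plain midpoint subdivision --- this is the numerical bottleneck the paper describes, which limited \cite{ZS2001} to degrees up to $9$ --- so monotonicity cannot be ``read off'' any formula. The paper's route to (c), via (6.3) and (6.4) of \cite{PC2011}, is a comparison argument exploiting the triangle inequality: a convex combination of complex numbers whose arguments are spread over a larger segment angle has strictly smaller modulus, which yields $\lambda_\alpha > \lambda_\theta > \lambda_\pi$ for $0<\alpha<\theta<\pi$ uniformly over the whole class of general midpoint operators, with no spectral formula needed. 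As it stands, your proposal leaves the positivity, simplicity, and dominance of $\lambda_\varphi$, the nonzero edge in $(\affr M_\infty)_{0\ldots 1}$, and the monotonicity in (c) unproven.
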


This can be proved as (5.4), (5.7), (6.3), and (6.4) in \cite{PC2011}. For midpoint subdivision schemes, $\lambda_\pi$ is equal to $1/4$ and to the subdominant eigenvalue $\mu_0$ of frequency $0$. This implies that $\lambda_{2\pi/m}$ is subdominant. However, for general midpoint subdivision schemes, $\lambda_{\pi}$ can be smaller than $\mu_0$. Therefore, we use the following lemma to show that $\lambda_{2\pi/m}$ is subdominant for $m\ge 5$.

\begin{lemma}[{\rm Scaling factor $\sigma$}]
\label{LEMMA:SUBDOMINANCEValence4}
\begin{enumerate}
\item[(a)] The scaling factor $\sigma = \sigma(U)$ defined in Equation~(\ref{EQ:SCALING}) is the subdominant eigenvalue of $U$ for $m=4$ and
\item[(b)] $\lambda_{\pi/2} = \sigma$ holds for any $m$ and $f$ such that $\frac{2f \pi}{m} = \frac{\pi}{2}$.
\end{enumerate}
\end{lemma}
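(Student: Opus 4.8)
The plan is to prove (a) with the second order difference scheme of Sections~\ref{SECTION:DifferenceSchemes} and~\ref{SECTION:Smoothness} and then to reduce (b) to (a) by using that $\lambda_\varphi$ is independent of the valence.

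For (a), I would first note that for $m=4$ the mesh is regular, so $U$ maps $\mathbb Z^2$ to $\sigma\,\mathbb Z^2$ with $\sigma$ as in Equation~(\ref{EQ:SCALING}). Since each of $A$, $R$, $V$ reproduces affine functions, so does $U$, and therefore the two coordinate functions span a $2$-dimensional $U$-invariant space on which $U$ acts by scaling with factor $\sigma$; hence $\sigma$ is an eigenvalue of $U$ of multiplicity at least $2$, while the constants give the dominant eigenvalue $1$. To see that $\sigma$ is subdominant, I would use that an eigenvector $\affr C$ of $U$ with eigenvalue $\mu$ and non-vanishing second order differences yields, by Equation~(\ref{EQ:Diff2}), an eigenvector $\affr C_{\widehat\nabla \nabla}$ of $U_{\widehat\nabla \nabla}$ with the same eigenvalue $\mu$. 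Thus every eigenvalue of $U$ whose eigenvector is not affine is an eigenvalue of $U_{\widehat\nabla \nabla}$ and is therefore bounded in modulus by the spectral radius $\rho(U_{\widehat\nabla \nabla}) \le \Vert (U^2)_{\widehat\nabla \nabla}\Vert^{1/2}$. The proof of Theorem~\ref{SATZ:C1REGULAR} already establishes $\Vert (U^2)_{\widehat\nabla \nabla}\Vert < \sigma(U^2) = \sigma^2$, so $\rho(U_{\widehat\nabla \nabla}) < \sigma$. As the only affine eigenvalues are $1$ (constants) and $\sigma$ (coordinate functions), $\sigma$ is the subdominant eigenvalue of $U$, which is (a). The argument covers all frequencies at once; in particular it shows that the subdominant frequency-$0$ eigenvalue $\mu_0$ is $<\sigma$, since its eigenvector is rotation symmetric and non-constant, hence not affine.

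For (b), I would exploit that, by Theorem~\ref{SATZ:SUBDOMINANCE1}\,(a), the eigenvalue $\lambda_\varphi$ depends only on the segment angle $\varphi$ and not on $f$ or $m$. It therefore suffices to evaluate $\lambda_{\pi/2}$ for the single convenient choice $m=4$, $f=1$, for which $\varphi = 2\pi/4 = \pi/2$ and the grid mesh is the regular grid. For $m=4$ the rotation symmetric eigennets of frequency $f=1$ (and of the conjugate frequency $m-f=3$) are precisely the coordinate functions, i.\,e.\ the linear eigenvectors of eigenvalue $\sigma$ identified in (a), and their segment angle is $\pi/2$. Hence $\lambda_{\pi/2}$ equals the subdominant eigenvalue of the regular scheme, which by (a) is $\sigma$. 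The valence independence of $\lambda_\varphi$ then yields $\lambda_{\pi/2} = \sigma$ for every $m$ and $f$ with $2f\pi/m = \pi/2$, establishing (b).

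The routine parts are the affine reproduction of $A$, $R$, $V$ and the fact that the second order differences of a mesh vanish exactly for affine meshes, already used implicitly in Sections~\ref{SECTION:DifferenceSchemes} and~\ref{SECTION:Smoothness}. The main obstacle is the identification, in the language of Theorem~\ref{SATZ:SUBDOMINANCE1}, of the frequency-$1$ symmetric eigennet of the regular ($m=4$) scheme with the two coordinate functions and of its eigenvalue with the subdominant eigenvalue $\sigma$: one must check that no eigenvalue of frequency $0$ or $2$ reaches $\sigma$, which is exactly what the uniform bound $\rho(U_{\widehat\nabla \nabla}) < \sigma$ guarantees, and that the frequency-$1$ eigenspace is spanned by genuine, non-degenerate linear functions, so that $\sigma$ is actually attained rather than merely bounded. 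Once these identifications are in place, the valence-independence property of $\lambda_\varphi$ does the rest.
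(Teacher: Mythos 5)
Your proposal is correct and follows essentially the same route as the paper: part (a) via Equation~(\ref{EQ:Diff2}) and the bound $\Vert (U^2)_{\widehat\nabla \nabla}\Vert < \sigma^2$ from the proof of Theorem~\ref{SATZ:C1REGULAR}, so that every eigenmesh either has vanishing second differences (hence is affine, with eigenvalues $1$ and $\sigma$) or has eigenvalue of modulus $< \sigma$; and part (b) by the same symmetry argument that $\lambda_{\pi/2}$ is independent of $f$ and $m$, reducing to the regular case $m=4$ where the grid is an eigennet with eigenvalue $\sigma$. Your added care about attainment of $\sigma$ in the frequency-$1$ eigenspace only makes explicit what the paper leaves implicit.
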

\begin{proof}
{We consider a regular scalar-valued eigenmesh $\lambda \affr M = U \affr M$ with eigenvalue $\lambda$. Since
\[\lambda^2 \affr M_{\widehat\nabla \nabla} = \left(U^2\right)_{\widehat\nabla \nabla} \affr M_{\widehat\nabla \nabla}\]
and $\left\Vert  \left(U^2\right)_{\widehat\nabla \nabla}\right\Vert < \sigma \left(U^2\right) = \sigma^2$ (see the proof of Theorem~\ref{SATZ:C1REGULAR}), it follows that $|\lambda| < \sigma$ or that $\affr M_{\widehat\nabla \nabla} = 0$, meaning that $\affr M$ is a linear image of a regular grid $\affr G$, i.\,e., a linear combination of the constant mesh $[1\;\ldots\;1]^{\mathrm{t}}$ with eigenvalue $1$ and the two coordinates of $\affr G$. Since $U \affr G = \sigma \affr G$, (a) follows for $m=4$. Since there is a basis of rotation symmetric eigenmeshes, it suffices for $m\ne 4$ to consider a rotation symmetric mesh $\affr M$ with segment angle $\pi/2$. Due to symmetry, the subdivided mesh $U \affr M$ does not depend on  $f$, whence (b) follows.
}
\quad
\end{proof}

This lemma together with Theorem~\ref{SATZ:SUBDOMINANCE1} and Lemma~\ref{LEMMA:ZeilensummeVonB} can be used as in the proof of Theorem~(7.3) in \cite{PC2011} to derive the following corollary.

\begin{corollary}[{\rm Subdominant eigenvalue of $U$ for valencies $\ge 5$}]
\label{FOLGERUNG:SUBDOMINANCE2}
Let $\rho$ be as in Equation~(\ref{EQ:SubMatrix}), let $U$ be a general midpoint subdivision operator mapping the space of $c.\rho$-ringnets of valence $m$ to itself, and let $\affr M$ be a $c.\rho$-grid mesh of valence $m$ and frequency $1$.
If $m\ge 5$, the meshes
\[\affr M_k := \frac{U^k \affr M}{\left\Vert U^k \affr M\right\Vert}\]
converge to a subdominant eigenmesh $\affr M_\infty$ of $U$ called the \emph{characteristic mesh} of $U$ and its eigenvalue $\lambda_{2\pi/m}$ has geometric and algebraic multiplicity $2$.
\end{corollary}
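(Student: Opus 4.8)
The plan is to determine the entire spectrum of the subdivision matrix $S$ from Equation~(\ref{EQ:SubMatrix}) and to single out $\lambda_{2\pi/m}$ as the largest eigenvalue strictly below the dominant eigenvalue $1$. Recall from Section~\ref{SECTION:BasicObservations} that, in the lower triangular block form of $S$, the nonzero eigenvalues are exactly those of the core block $\M C$ together with those of $\M B$ and $\M A$, and that $1$ is the simple dominant eigenvalue by Theorem~\ref{SATZ:C0}. Decomposing the action of $\M C$ into the rotation symmetric frequencies $f = 0, 1, \ldots, m-1$, the eigenvalues coming from frequencies $f \ge 1$ are the numbers $\lambda_{2f\pi/m}$ of Theorem~\ref{SATZ:SUBDOMINANCE1}, where conjugate frequencies $f$ and $m-f$ share the same value, while frequency $0$ contributes the eigenvalue $1$ and a subdominant value $\mu_0$. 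It therefore suffices to show that $\lambda_{2\pi/m}$ strictly exceeds every $\lambda_{2f\pi/m}$ with $f \ge 2$, the spectral radii $\rho_A$ and $\rho_B$, and $\mu_0$.

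First I would dispose of the frequencies $f \ge 2$ and the blocks $\M A, \M B$. Since $\varphi = 2\pi/m$ is the smallest segment angle in $(0,\pi]$ realized by a frequency $\ge 1$, the monotonicity statement Theorem~\ref{SATZ:SUBDOMINANCE1}(c) yields $\lambda_{2\pi/m} > \lambda_{2f\pi/m}$ for all $2 \le f \le m/2$, so that $\lambda_{2\pi/m}$ dominates the whole frequency-$\ge 1$ part, and by Theorem~\ref{SATZ:SUBDOMINANCE1}(b) it has geometric and algebraic multiplicity $2$. The crucial comparison with the scaling factor is where Lemma~\ref{LEMMA:SUBDOMINANCEValence4} enters: for $m \ge 5$ we have $2\pi/m < \pi/2$, hence, again by monotonicity together with part~(b) of that lemma, $\lambda_{2\pi/m} > \lambda_{\pi/2} = \sigma$. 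Combined with Lemma~\ref{LEMMA:ZeilensummeVonB}, which bounds $\rho_A, \rho_B \le \sigma$, this places the entire non-core spectrum strictly below $\lambda_{2\pi/m}$.

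The main obstacle is the frequency-$0$ eigenvalue $\mu_0$, precisely because, unlike for plain midpoint subdivision, $\mu_0$ need not coincide with $\lambda_\pi$ and the frequency-$0$ eigenmeshes depend on the valence $m$. Here I would argue, as in the proof of Lemma~\ref{LEMMA:SUBDOMINANCEValence4}, through the second order difference scheme: a frequency-$0$ eigenmesh $\affr M$ with eigenvalue $\mu$ satisfies $\mu^k \affr M_{\widehat\nabla \nabla} = (U^k)_{\widehat\nabla \nabla}\affr M_{\widehat\nabla \nabla}$ on its regular part, and since $\Vert (U^2)_{\widehat\nabla \nabla}\Vert < \sigma^2$ (proof of Theorem~\ref{SATZ:C1REGULAR}), either $|\mu| < \sigma$ or $\affr M_{\widehat\nabla \nabla}$ vanishes throughout the regular region. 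In the latter case $\affr M$ is piecewise linear on every segment, and the only such frequency-$0$ meshes reproduced by $U$ are the constant (eigenvalue $1$) and the radial mode scaling by $\sigma$; either way $\mu_0 \le \sigma < \lambda_{2\pi/m}$. This identifies $\lambda_{2\pi/m}$ as the subdominant eigenvalue of $U$ for $m \ge 5$. Finally, since $\affr M$ is a frequency-$1$ grid mesh, its iterates remain in the frequency-$1$ part, where the real positive eigenvalue $\lambda_{2\pi/m}$ is dominant and the ring contributions decay at rate $\le \sigma$; hence $U^k \affr M / \Vert U^k \affr M\Vert$ converges to an eigenmesh $\affr M_\infty$ in the associated two dimensional eigenspace, which is the characteristic mesh with the asserted multiplicity.
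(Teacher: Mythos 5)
Your overall route is the paper's route: the paper derives this corollary, citing the proof of Theorem (7.3) in \cite{PC2011}, from exactly the three ingredients you use --- the lower triangular block structure with Lemma~\ref{LEMMA:ZeilensummeVonB} giving $\rho_A,\rho_B\le\sigma$, the frequency decomposition of the core block with Theorem~\ref{SATZ:SUBDOMINANCE1}\,(b),(c) handling multiplicity and all frequencies $f\ge 2$, and Lemma~\ref{LEMMA:SUBDOMINANCEValence4}\,(b) with monotonicity giving $\lambda_{2\pi/m}>\lambda_{\pi/2}=\sigma$ for $m\ge 5$. The genuine gap is in your frequency-$0$ step, which is precisely the point the paper flags as the new difficulty (``for general midpoint subdivision schemes, $\lambda_\pi$ can be smaller than $\mu_0$''). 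You invoke $\mu^k\affr M_{\widehat\nabla\nabla}=(U^k)_{\widehat\nabla\nabla}\affr M_{\widehat\nabla\nabla}$ ``on the regular part'' of a valence-$m$ ringnet, but Equation~(\ref{EQ:Diff2}) is derived from the regular rules only: under iteration, the stencil of $(U^k)_{\widehat\nabla\nabla}$ at any \emph{fixed} ring of $U^k\affr M$ reaches into the extraordinary core (the dependence region contracts like $\sigma^k$ toward the irregularity), while at the rings where all stencils stay regular the eigenrelation $U^k\affr M=\mu^k\affr M$ is no longer available, since it holds only on the fixed $c.\rho$-net. So the dichotomy ``$|\mu|<\sigma$ or $\affr M_{\widehat\nabla\nabla}=0$ throughout the regular region'' is unjustified for $m\ne 4$; note that the proof of Lemma~\ref{LEMMA:SUBDOMINANCEValence4} runs this argument only in genuinely regular situations. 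Moreover, your classification of the second branch is off in detail: at an extraordinary element a piecewise linear ``radial'' frequency-$0$ mesh is in general \emph{not} reproduced by $U$ (the smoothing operators round off the tip), so there is no radial $\sigma$-mode to appeal to --- though this would not damage the conclusion $\mu_0\le\sigma$ if the dichotomy itself were established.

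The repair is the same symmetry reduction the paper uses inside the proof of Lemma~\ref{LEMMA:SUBDOMINANCEValence4}\,(b) (``due to symmetry, the subdivided mesh does not depend on $f$''): since $A$, $V$, and $R$ use only uniform, valence-independent weights (unlike the operators $B_i$ of Section~\ref{SECTION:DefGenMidSubII}), writing a rotation symmetric scalar mesh as $\v p^{\,l}_{ij}=e^{\I 2\pi lf/m}\,q_{ij}$ shows that the frequency-$f$ block of the subdivision matrix depends on $m$ and $f$ only through the factor $e^{\I 2\pi f/m}$. Hence the frequency-$0$ block is the same for all valencies, in particular equal to the valence-$4$ frequency-$0$ block; and for $m=4$ the whole $c.\rho$-net is regular, so the difference-scheme argument of Lemma~\ref{LEMMA:SUBDOMINANCEValence4}\,(a) applies legitimately: any frequency-$0$ eigenvalue $\mu$ with $|\mu|\ge\sigma$ forces $\affr M_{\widehat\nabla\nabla}=0$, i.e.\ a linear mesh, and a linear frequency-$0$ mesh is constant with eigenvalue $1$ (the $\sigma$-eigenmeshes, the grid coordinates, have frequency $\pm 1$). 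This yields $\mu_0<\sigma<\lambda_{2\pi/m}$ for $m\ge 5$, after which the remainder of your argument --- including the power-iteration convergence within the frequency-$1$ part via Theorem~\ref{SATZ:SUBDOMINANCE1}\,(a) and the decay of the outer rings at rate $\le\sigma$ --- goes through as written.
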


\begin{remark}[{\rm Subdominant eigenvalue for valence $3$}]
\label{REMARK:SUBDOMINANCE}
Corollary~\ref{FOLGERUNG:SUBDOMINANCE2} is also true for $m=3$ if
\begin{equation}\label{EQ:m3}
\lambda_{2\pi/3}>\max \{|\mu_0|, \rho_B, \rho_A\}\;,
\end{equation} 
where $\rho_B$ and $\rho_A$ are the spectral radii defined in Lemma~\ref{LEMMA:ZeilensummeVonB}. 
\end{remark}

\begin{example}[{\rm The schemes $\mathit{VRV}$, $\mathit{VRVR}$, and $V^2$ for valence $3$}]
\label{EXAMPLE:Valence3}
Corollary~\ref{FOLGERUNG:SUBDOMINANCE2} holds for $U_1=\mathit{VRV}$ and $U_2=\mathit{VRVR}$, because 
$\mu_0 = \rho_B = \rho_A=0$, $\lambda_{2\pi/3}(U_1) = 1/8$, and $\lambda_{2\pi/3}(U_2) = 1/16$.
However, for $V^2$ we get 
$\lambda_{2\pi/3} = \mu_0 = \rho_B = 1/4$
and hence, Corollary~\ref{FOLGERUNG:SUBDOMINANCE2} does not apply.
\end{example}

%%%%%%%%%%%%%%%%%%%%%%%%%%%%%%
\section{Smoothness for irregular meshes}\label{SECTION:GenMidC1}
In this section we analyze general midpoint schemes of the form 
\[U=(A^{a_s} V^{v_s} R^{r_s})  \cdots (A^{a_1} V^{v_1} R^{r_1}) = U_1 \cdots U_n\;,\]
where $\sum_{i} a_i + \sum_{i} v_i \ge 1 \le  \sum_{i} v_i + \sum_{i} r_i$  and $U_i \in \left\{A, R, V^2, \mathit{VAV}, \mathit{VA^\mathrm{2}V},¯\ldots\right\}$. We call these schemes \emph{$\mathrm{VAV}$-schemes} and analyze their characteristic maps by estimating their partial derivatives using cones defined by direction vectors of the spokes $S_0$ and $S_1$ of a grid mesh.

To simplify the notation, we identify the real plane $\mathbb R^2$ with the complex plane $\mathbb C$ by the bijection $\mathbb R^2 \ni [x\;y]^{\mathrm t} \mapsto x + \I y \in \mathbb C$.

Let
\[\affr C(\alpha, \beta)  := (0, \infty) e^{\I\, [\min(\alpha, \,\beta), \,\max(\alpha, \,\beta)]}\]
and
\[\affr C_0(\alpha, \beta)  := \affr C(\alpha, \beta) \cup \{0\}\]
be the unpointed and pointed cones spanned by $e^{\I \, \min(\alpha, \,\beta)}$ and $e^{\I \, \max(\alpha, \,\beta)}$, respectively.

Moreover, we define an operator $\nabla_2$ on a ringnet $\affr N = \left[\v p_{ij}^k\right]$ by
\begin{equation}\label{EQ:Nabla}
\nabla_2(\affr N) := \left\{\nabla_2 \v p^0_{i,j} = \v p^0_{i,j} - \v p^0_{i,j-1}  \;|\; i\ge 0, j>0\right\}\;.
\end{equation}

\begin{lemma}[{\rm A bound for $\nabla_2(\affr C)$}]
\label{LEMMA:AngleArea}
Let $U$ be a $\mathrm{VAV}$-scheme. Its characteristic mesh $\affr C = \left[\v c_{ij}^k\right]$ defined in Corollary~\ref{FOLGERUNG:SUBDOMINANCE2} and Remark~\ref{REMARK:SUBDOMINANCE} satisfies 
\[\nabla_2(\affr C) \subset \affr C\left(\frac{2\pi}{m}, \,\frac{\pi}{2}\right) =: \affr D\;.\]
\end{lemma}
\begin{proof}
{
Let $\affr M$ be a sufficiently large grid mesh with valence $m$ and frequency $1$.

First, we show that 
\[\nabla_2\left(U^k \affr M\right) \subset \affr D_0  := \affr C_0\left(\frac{2\pi}{m}, \,\frac{\pi}{2}\right)\]
holds for $k\ge 0$. 
Since $\nabla_2(\affr M) \subset \affr D_0$, it suffies to prove that $A$, $R$, and $\mathit{VA^lV}$ map a symmetric ringnet $\affr N$ with $\nabla_2(\affr N) \subset \affr D_0$ to a symmetric ringnet $\affr N'$ with $\nabla_2(\affr N') \subset \affr D_0$. Obviously, $\affr N'$ is symmetric. 
Furthermore,  the elements of $\nabla_2(A \affr N)$ and $\nabla_2(R \affr N)$ are linear combinations of elements in $\nabla_2(\affr N)$ with non-negative weights or, due to symmetry, are non-negative multiples of  $\v u_1$ or $\I \v u_0$, where $\v u_0:=1$ and $\v u_1:= e^{\I (2\pi / m)}$ are the direction vectors of the spokes $S_0$ and $S_1$, respectively.
For a net $A^lV\affr N$, we consider the $\v u_1$-diagonals of all faces belonging partly or completely to the $0$-th segment. The directions of these diagonals are either obtained by iteratively averaging the elements of $\nabla_2(\affr N)$ or are, due to symmetry, parallel to $\v u_1$ or $\I \v u_0$ (see the top and middle of Figure~\ref{Abb:MaskElements}). Hence, they lie in $\affr D_0$. Since every  element of $\nabla_2(VA^lV \affr N)$ is either obtained by halving the $\v u_1$-diagonals of $A^lV\affr N$ or, due to symmetry, is parallel to $\I \v u_0$ (see the bottom of Figure~\ref{Abb:MaskElements}), we conclude $\nabla_2(VA^lV \affr N) \subset \affr D_0$.

\begin{figure}[htb]
\begin{center}
%{\includegraphics[scale=0.833333]{MaskElements.pdf}}
{\includegraphics[scale=0.833333]{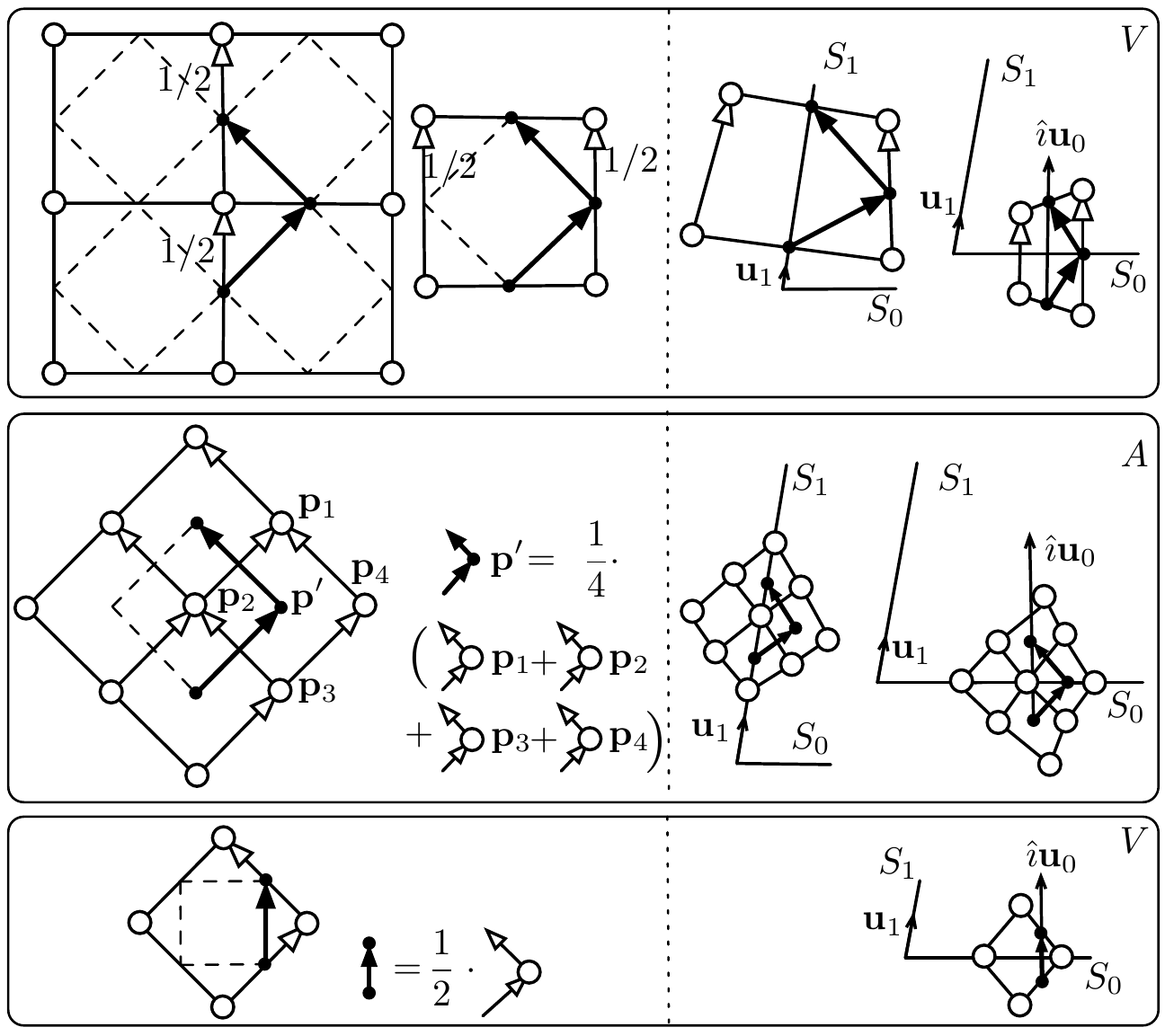}}
\caption{Generating $\v u_1$-diagonals by averaging and projecting for the operation $V$ (top) and for the operation  $A$ (middle) and generating $\nabla_2(VA^lV \cal{N})$ by halving and projecting the $\v u_1$-diagonals of $A^lV \cal{N}$ (bottom).}
\label{Abb:MaskElements}
\end{center}
\end{figure}

Second, we get $\nabla_2(\affr C) \subset \affr D_0$ because $\affr C=\lim_{k\to \infty}U^k \affr M/\left\Vert U^k \affr M \right\Vert$.

Finally, we show $\nabla_2(\affr C) \subset \affr D$. Due to Theorem~\ref{SATZ:SUBDOMINANCE1}\,(a), $\affr C_{0\ldots 1}$ has at least one non-zero edge. Let $\M S$ be the subdivision matrix of $U$ restricted to meshes of the same size as $\affr C$. For sufficiently large $k$, every element of $\nabla_2\left(S^k\affr C\right) \left(= \lambda^k \nabla_2(\affr C), \lambda>0\right)$ is a linear combination of $\nabla_2(\affr C)$ with non-negative weights and a positive weight for the non-zero edge in the $1$-ringnet. Hence, $\nabla_2(\affr C)$ has no zero elements and thus, $\nabla_2(\affr C) \subset \affr D$.
}
\qquad
\end{proof}

\begin{theorem}[{\rm $C^1$-property of $\mathrm{VAV}$-schemes}]
\label{SATZ:C1U}
Generic subdivision surfaces obtained by a $\mathrm{VAV}$-scheme $U$ are $C^1$ continuous at extraordinary points with valencies \mbox{$m\geq 5$}  and, if Inequality~(\ref{EQ:m3}) is satisfied, also with valence $m=3$ .
\end{theorem}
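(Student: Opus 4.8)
The plan is to invoke the standard sufficient criterion for $C^1$-continuity at an extraordinary point (Reif's criterion, as applied in \cite{PC2011}): a scheme produces generically $C^1$ limit surfaces at an extraordinary point of valence $m$ provided that (i) the subdominant eigenvalue of the subdivision matrix is real, positive, and has geometric and algebraic multiplicity two, and (ii) the associated characteristic map is regular and injective. Condition (i) is already supplied by Corollary~\ref{FOLGERUNG:SUBDOMINANCE2} for $m\ge 5$ and by Remark~\ref{REMARK:SUBDOMINANCE} under Inequality~(\ref{EQ:m3}) for $m=3$, the subdominant eigenvalue being $\lambda_{2\pi/m}>0$ of multiplicity two with associated characteristic mesh $\affr C$. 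It therefore remains to establish (ii) for the characteristic map $\Psi$ built from $\affr C$, and this is exactly where Lemma~\ref{LEMMA:AngleArea} enters.

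For regularity I would first produce a companion bound for the first differences from the bound $\nabla_2(\affr C)\subset\affr D=\affr C(2\pi/m,\pi/2)$ of Lemma~\ref{LEMMA:AngleArea}. Exploiting the reflection symmetry $\v c_{ij}^0 = e^{\I 2\pi/m}\,\overline{\v c_{ji}^0}$ of a frequency-$1$ symmetric eigennet one gets $\nabla_1 \v c_{ij}^0 = e^{\I 2\pi/m}\,\overline{\nabla_2 \v c_{ji}^0}$, hence $\nabla_1(\affr C)\subset\affr D':=\affr C(2\pi/m-\pi/2,\,0)$. The two cones $\affr D$ and $\affr D'$ are disjoint and their union is contained in an open angular window of width less than $\pi$ (namely $\pi-2\pi/m$ for $m\ge 5$, and $2\pi/3$ for $m=3$). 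Since the partial derivatives $\partial_u\Psi$ and $\partial_v\Psi$ are limits of rescaled nonnegative combinations of the edge vectors $\nabla_1(\affr C)$ and $\nabla_2(\affr C)$, and since these combinations cannot vanish (a positive combination of nonzero vectors confined to an angular window smaller than $\pi$ is nonzero), the partials are nonzero and lie in $\overline{\affr D'}$ and $\overline{\affr D}$ respectively. Being confined to two disjoint cones, they are linearly independent, so the Jacobian of $\Psi$ never vanishes and $\Psi$ is regular on the punctured characteristic domain, the behaviour at the center being fixed by the eigenmesh self-similarity.

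For injectivity I would read off from the same cone bounds the monotonicity that, within the zeroth segment, $\mathrm{Re}(\Psi)$ strictly increases in the $u$-parameter (because $\affr D'$ lies where the real part of a difference is positive) while $\mathrm{Im}(\Psi)$ strictly increases in the $v$-parameter (because $\affr D$ lies in the open upper half-plane). Together with regularity this forces $\Psi$ to be injective on the closed zeroth segment, whose two bounding spokes map to simple arcs $\gamma$ and $e^{\I 2\pi/m}\gamma$ (the latter identity following from the rotation symmetry $\v c_{0j}^0=e^{\I 2\pi/m}\v c_{j0}^0$). Invoking the rotation symmetry $\affr M_\infty^{\,l}=e^{\I 2\pi l/m}\affr M_\infty^{\,0}$ and the fact that for frequency $1$ the $m$ segments accumulate a total rotation of $2\pi$, the $m$ rotated copies of the segment image tile an annular neighborhood of the origin without overlap or gap, so $\Psi$ is globally injective on one ring; the self-similarity of $\Psi$ under the scaling by $\lambda_{2\pi/m}$ then extends injectivity to the whole punctured neighborhood. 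With (i) and (ii) in hand, Reif's criterion yields the asserted generic $C^1$-continuity for $m\ge 5$ and, under Inequality~(\ref{EQ:m3}), for $m=3$.

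The main obstacle will be the injectivity argument, and within it two points in particular: first, passing rigorously from the discrete monotonicity and pointwise regularity to global injectivity of the continuous map on a single segment, which is naturally carried out by a winding-number or degree count over the segment boundary; and second, verifying that adjacent segment images meet exactly along their shared spoke arcs and that the accumulated rotation over all $m$ segments is precisely $2\pi$, so that the tiling closes up into an embedded annulus rather than overlapping or leaving a gap. The regularity step is comparatively routine once the companion bound $\nabla_1(\affr C)\subset\affr D'$ is established, the only care being the limit transfer from mesh differences to the derivatives of $\Psi$.
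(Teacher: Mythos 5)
Your overall skeleton coincides with the paper's: Reif's criterion, with condition (i) supplied by Corollary~\ref{FOLGERUNG:SUBDOMINANCE2} and Remark~\ref{REMARK:SUBDOMINANCE}, and condition (ii) derived from the cone bound of Lemma~\ref{LEMMA:AngleArea} together with the companion bound $\v c_x(\Omega_0) \subset \affr D - \pi/2$ (your $\affr D'$, obtained by the same reflection symmetry) and the linear independence of the two disjoint cones. However, your regularity step has a genuine gap precisely where the paper's proof does its real work. The partial derivatives of the characteristic map are \emph{not} nonnegative combinations of $\nabla_1(\affr C)$ and $\nabla_2(\affr C)$ alone. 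First, Lemma~\ref{LEMMA:AngleArea} only bounds the zeroth-segment differences of $\affr C$ itself; to reach the derivatives you need $\nabla_2\left(U^k \affr C/\sigma^k\right) \subset \affr D_0$ for all $k$, and under one subdivision step the new differences near the spokes involve edges of the \emph{adjacent} segments, which enter as reflections at $S_1$ and as negated reflections at $S_0$ --- these do not lie in $\affr D_0$. The paper's first step shows that such reflected contributions carry weights dominated by their unreflected counterparts, so that the balanced sums are parallel to $\v u_1$ resp.\ $\I \v u_0$ and the combinations stay in $\affr D_0$; only then does induction give the cone bound for all $k$. Second, even granting that bound, $\affr D_0$ is closed and contains $\v 0$, and a limit of nonzero vectors may vanish; your ``positive combination in a window smaller than $\pi$'' argument does not apply at points of $\Omega_0$ near the spokes, where $\v c_y$ is a convex combination of $y$-edges drawn from \emph{two adjacent segments}, whose union is not contained in $\affr D_0$. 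The paper's second step handles exactly this with the sets $\affr E_{-1}$, $\affr E_0$ and the wider common cone $\affr C(2\pi/m,\, \pi - 2\pi/m)$, still of width less than $\pi$, yielding $\v 0 \notin \v c_y(\Omega_0)$. Without these two steps your nonvanishing claim is unsupported exactly at the segment boundaries, where it can fail.

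On injectivity you take a genuinely different and heavier route than the paper, and the two points you flag as the ``main obstacle'' are avoidable. The paper argues in two lines: since $\v c_y(\Omega_0) \subset \affr D$ and $\v c_x(\Omega_0) \subset \affr D - \pi/2$, the derivative of $\v c$ along any line segment in $\Omega_0$ stays in a cone of width less than $\pi$, so no such segment maps to a closed curve and $\v c$ is injective on $\Omega_0$; and by symmetry $\v c(\Omega_0)$ lies in the sector $\affr A = [0,\infty)\, e^{\I\,[0,\,2\pi/m]}$ with the interior of $\Omega_0$ mapped into the interior of $\affr A$, so distinct segment images cannot overlap and the total map is injective. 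This sector-confinement argument replaces your coordinate-monotonicity, winding-number, and annulus-tiling program entirely, including the delicate verification that the $m$ rotated copies ``close up''. In short: same approach in outline, but the dominated-reflection induction and the two-segment cone bound that you skip are the substance of the paper's proof of regularity, and your injectivity argument should be replaced by the much shorter sector argument.
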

\begin{proof}
{
Let $\v c(x, y): \Omega \to \mathbb C$ be three segments of the characteristic map of $U$, where $\Omega = \Omega_{-1} \cup \Omega_{0} \cup \Omega_{1}$, as shown in Figure~\ref{Abb:CharMap-quad-2}, and $\v c_{\,|\, \Omega_i}$ is the $i$-th segment for $i=-1, 0, 1$.

\begin{figure}[h!]
\begin{center}
%{\includegraphics[scale=0.833333]{CharMap-quad-2.pdf}}
{\includegraphics{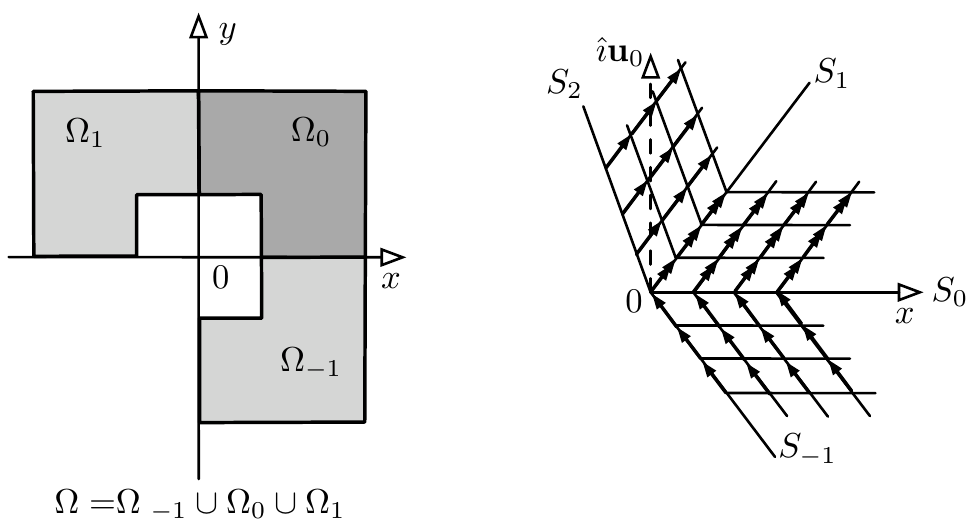}}
\caption{The domain $\Omega$ of $\v c$ (left) and the $-1$, $0$, $1$-th segments of a grid mesh (right), where the $y$-edges in these three segments are marked by arrows and the $y$-edges in the $0$-th segment are especially marked by double arrows.}
\label{Abb:CharMap-quad-2}
\end{center}
\end{figure}

First, we observe that for a symmetric ringnet $\affr N$, each element of $\nabla_2(A \affr N)$, $\nabla_2(2R \affr N)$, and $\nabla_2(2VA^lV \affr N)$ is a convex combination of elements in $\nabla_2(\affr N)$, in $\nabla_2(\affr N)$ reflected at $S_1$, and in $-\nabla_2(\affr N)$ reflected at $S_0$, where a reflected element has a weight which is less than or equal to that of the unreflected counterpart. Thus, by Lemma~\ref{LEMMA:AngleArea} and by induction, we see that $\nabla_2\left(U^k \affr C / \sigma^k\right) \subset \affr D_0$ for $k\ge 0$. Since every partial derivative $\v c_y$ over $\Omega_0$ is the limit of a sequence of vectors $\v v_k \in \nabla_2\left(U^k \affr C / \sigma^k\right)$, it follows that $\v c_y(\Omega_0) \subset \affr D_0$.

Second, we show $\v 0 \notin \v c_y(\Omega_0)$. Any $\v c_y(\v x)$, $\v x \in \Omega_0$, is a convex combination of  $\affr E_{-1}$ or $\affr E_0$, where $\affr E_i$ is the set of all $y$-edges in the segments $i$ and $i+1$ of $\affr U^k \affr C / \sigma^k$ for sufficiently large $k$. As in the proof of Lemma~\ref{LEMMA:AngleArea}, we observe 
\[\affr E_{-1} \subseteq  \affr C(2\pi/m, \, \pi-2\pi/m) \supseteq e^{\pi/2-2\pi/m} \affr E_0\;,\]
which implies $\v 0 \notin \v c_y(\Omega_0)$.

Hence, $\v c_y(\Omega_0) \subset \affr D$ and similarly, $\v c_x(\Omega_0) \subset \affr D-\pi/2$. Since each element of $\affr D$ is linearly independent with each element of $\affr D-\pi/2$, $\v c$ is regular over $\Omega_0$ (see Figure~\ref{Abb:Regularity-D}) and hence, the total characteristic map of $U$ is regular. 
Since $\v c_y(\Omega_0) \subset \affr D$, $\v c$ does not map any line segment between two points in $\Omega_0$ to a closed curve, meaning that $\v c$ is injective over $\Omega_0$.
Moreover, since $\affr M$ is a symmetric grid mesh whose $0$-th segment lies in  ${[}0,\; \infty ) \; e^{\I [0,\; 2\pi/m]} =: \affr A$ and $U$ preserves symmetry, it implies $\v c(\Omega_0) \subset \affr A$ and $\v c$ maps the interior of $\Omega_0$ into the interior of $\affr A$. Hence, the total characteristic map of $U$ is injective. 

\begin{figure}[h!]
\begin{center}
%{\includegraphics[scale=0.833333]{Regularity-D}}
{\includegraphics{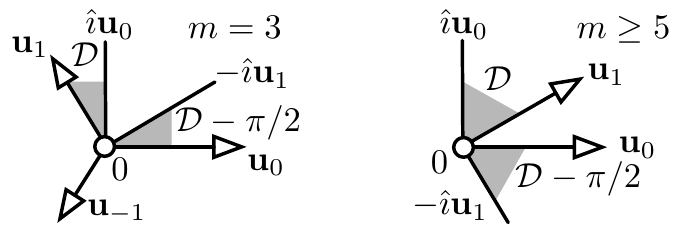}}
\caption{The direction cones $\affr D$ and $\affr D-\pi/2$.}
\label{Abb:Regularity-D}
\end{center}
\end{figure}

Thus, Reif's $C^1$-criterion \cite[Theorem 3.6]{Reif95} is satisfied, which finally concludes the theorem.
}
\qquad
\end{proof}

\begin{remark}[{\rm Checking the $C^1$-property for valence $3$}]
\label{REMARK:C1Um3}
According to Theorem~\ref{SATZ:C1U}, a $\mathrm{VAV}$-scheme generates $C^1$ surfaces at extraordinary points with valence $3$ if Inequality~(\ref{EQ:m3}) is satisfied.
Otherwise, if Inequality~(\ref{EQ:m3}) is not satisfied, there is no guarantee that the eigenvalue $\lambda_{2\pi/3}$ is a double subdominant eigenvalue and that it satisfies Reif's $C^1$-criterion as is the case with  $U=V^2$ (see Example~\ref{EXAMPLE:Valence3}).
In this case, we can use the more general $C^1$-criterion in \cite[Definition 5.10 and Theorem 5.25]{PR2008} with the weaker assumption that exactly two subdominant linearly independent generalized eigenvectors exist while the characteristic map defined by the (non-generalized) subdominant eigenvectors is regular and injective. Regularity and injectivity still follow (even for valence $3$) if $\nabla_2(\affr C) \subset \affr D_0$, where $\affr C$ is the control mesh of the characteristic map.
\end{remark}

%%%%%%%%%%%%%%%%%%%%%%%%%%%%%%
\section{Two examples}\label{SECTION:Examples} 
In this section, we consider two general midpoint subdivision schemes and study their smoothness properties.

\begin{example}[{\rm $C^1$-property of $V^2$}]
\label{EXAMPLE:V2C1}
Since $V^2$ is a $\mathrm{VAV}$-scheme, it generates $C^1$ subdivision surfaces for regular meshes and for meshes with extraordinary elements of valencies $\ge 5$ according to Theorems~\ref{SATZ:C1REGULAR} and \ref{SATZ:C1U}.

For valence $3$, due to Example~\ref{EXAMPLE:Valence3}, Inequality~(\ref{EQ:m3}) is not satisfied for $V^2$.
According to \cite{PR97} and \cite[Section 6.3]{PR2008}, the subdivision matrix of $V^2$ for valence $3$ has exactly two linearly independent generalized eigenvectors associated with the subdominant eigenvalue $1/4$ and thus, the characteristic map exists. This map is regular and injective. This is shown in \cite{PR97} and can also easily be derived from Remark~\ref{REMARK:C1Um3}:

The $0$-th segment $\left[\v c_{ij}^0\right]$ of the control net $\affr C=\left[\v c_{ij}^k\right]$ is given by
\begin{eqnarray*}
\left[\v c_{ij}^0\right] 
&=& \left[ \begin{array}{rrr} 
\v c_{13}^0 & \v c_{23}^0 & \v c_{33}^0\\
\v c_{12}^0 & \v c_{22}^0 & \v c_{32}^0\\
\v c_{11}^0 & \v c_{21}^0 & \v c_{31}^0
\end{array}\right]
= \left[ \begin{array}{rrr}
-1 & 1.75 & 4\\ 0 & 2.5 & 4.75\\ 1 & 3 & 5
\end{array}\right]
+\sqrt{3}\I
\left[ \begin{array}{rrr}
3 & 3.75 & 4\\ 2 & 2.5 & 2.75\\ 1 & 1 & 1
\end{array}\right]\;.
\end{eqnarray*}
Thus, the edges $\nabla_2 \v c^0_{ij}$ $(j=3, 2, 1, \; i=1, 2, 3)$ of $\affr C$ are
\[
\begin{array}{rrr} 
-1+\sqrt{3}\I, & -0.75+1.25 \cdot \sqrt{3}\I, & -0.75+1.25 \cdot\sqrt{3}\I, \\
-1+\sqrt{3}\I, & -0.5+1.5 \cdot\sqrt{3}\I, & -0.25+1.75 \cdot\sqrt{3}\I,\\
2\cdot\sqrt{3}\I,  & 2\cdot\sqrt{3}\I, & 2\cdot\sqrt{3}\I.
\end{array}
\]
They all lie in $\affr D =  \affr C\left(\frac{\pi}{2}, \,\frac{2\pi}{3}\right)$.
Hence, due to Remark~\ref{REMARK:C1Um3}, the scheme $V^2$ also generates $C^1$ surfaces for valence $3$.
\end{example}

In general, if a general midpoint scheme is not a $\mathrm{VAV}$-scheme, the $C^1$ analysis technique in this paper usually does not apply.

\begin{example}[{\rm Non $\mathrm{VAV}$-schemes}]
\label{EXAMPLE:VAV}
We consider the non $\mathrm{VAV}$-scheme $\mathit{VRVR}$. 
It generates $C^1$ subdivision surfaces for regular meshes according to Theorem~\ref{SATZ:C1REGULAR}.

Operate this scheme on a dual grid mesh $\affr M$ with segment angle $\varphi$. In Figure~\ref{FIG:VRVRVRVR}, the edge $\v e$ in $(\mathit{VRVR})^2 \affr M$ has the  direction angle $\pi-\arctan(16\tan \varphi)$ and hence, it does not lie in the pointed cone $\affr D_0$ spanned by $e^{\I \varphi}$ and $e^{\I \pi/2}$.

Hence, for this scheme, we cannot check the regularity and the injectivity of the characteristic map by using the $C^1$ analysis technique in this paper (see the proof of Theorem~\ref{SATZ:C1U}).
\end{example}

%\begin{figure}[htb]
\begin{figure}[h!]
\begin{center}
{
\includegraphics[scale=0.833333]{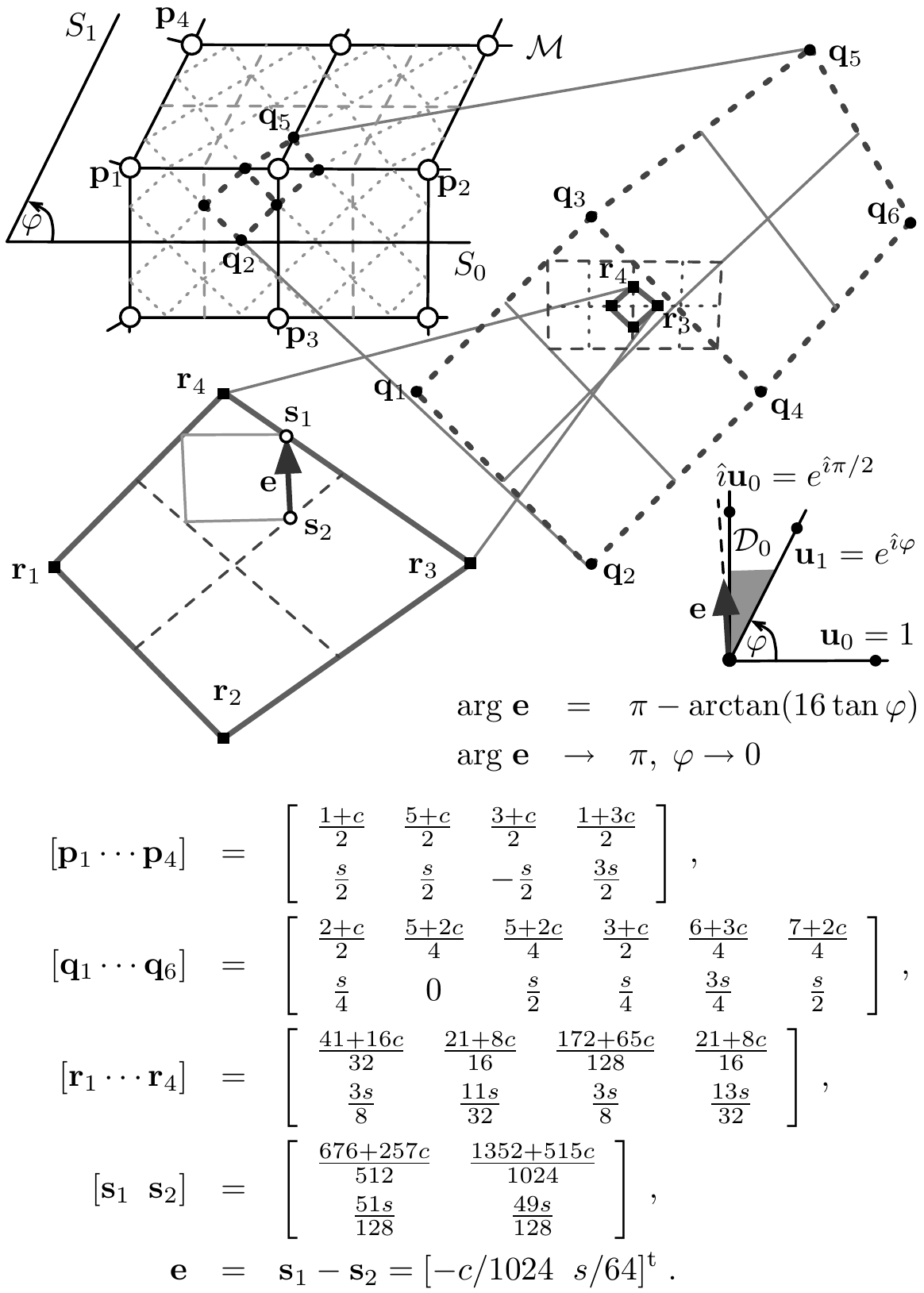}
}
%\caption{$\v e \notin \affr D_0$\;.}
\caption{Since $\nabla_2\left(U^2 \affr M\right) \ni \v e \notin \affr D_0$\;, $\nabla_2\left(U^k\affr M/\sigma^k\right) \subset \affr D$ does not hold, where $\affr M$ is a dual grid mesh with segment angle $\varphi$, $c=\cos \varphi$, $s=\sin \varphi$, and $U=\mathit{VRVR}$\;.}
\label{FIG:VRVRVRVR}
\end{center}
\end{figure}

%%%%%%%%%%%%%%%%%%%%%%%%%%%%%%
\section{Generalized Catmull-Clark subdivision}\label{SECTION:DefGenMidSubII}

In this section, we introduce the \emph{generalized Catmull-Clark subdivision schemes}, which are another generalization of the midpoint subdivision schemes, and analyze the smoothness of the subdivision surfaces.

For an odd degree $n=2r+1\ge3$, we generalize the midpoint scheme
\[M_n = A^{n-1}R = A^2 \,\cdots\, A^2 \, R
\quad
\mbox{to}
\quad
\overline{M}_n = B_{r}  \,\cdots \,B_1\, R,\]
and, for an even degree $n=2r+2\ge 2$, we generalize the midpoint scheme
\[M_n = A^{n-1}R = A\, A^2 \, \cdots A^2 \, R
\quad
\mbox{to}
\quad
\overline{M}_n = A\, B_{r} \, \cdots B_{1}\, R,\]
where $R$ is the refinement operator shown at the top of Figure~\ref{FIG:OperatorsRAV} and where each smoothing operator \mbox{$B_i=B_{\alpha_i,\beta_i}$} is a generalized operator of $A^2$ with the mask shown in Figure~\ref{Abb:MaskB}. The functions $\alpha_i(\cdot)$ and $\beta_i(\cdot)$ are non-negative functions depending on $i$ and they satisfy $0<\alpha_i + \beta_i < 1$, $\alpha_i(4) = 1/4$, and $\beta_i(4) = 1/2$. 
\begin{figure}[htb]
\begin{center}
%{\includegraphics[scale=0.833333]{MaskB}}
{\includegraphics{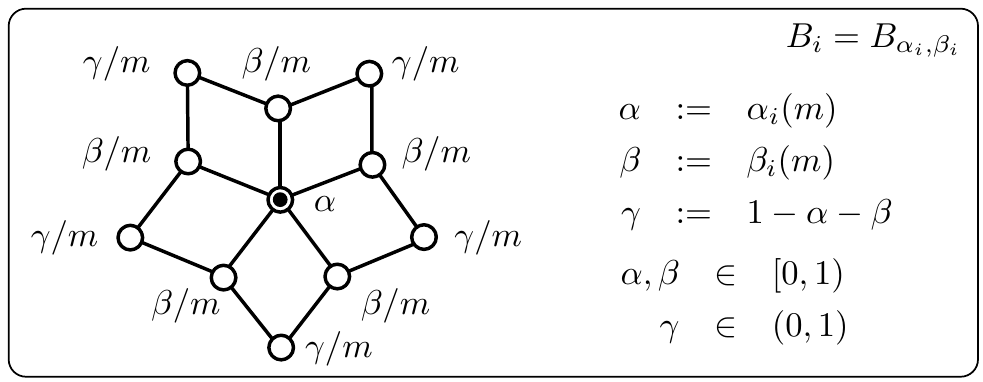}}
\caption{Smoothing operator $B_i=B_{\alpha_i,\beta_i}$ for a vertex of valence $m$.}
\label{Abb:MaskB}
\end{center}
\end{figure}

It is straightforward to see that $B_1 = A^2$ holds on regular meshes and that $\overline{M}_3 = B_1\, R$ is the Catmull-Clark scheme with restricted parameters $\alpha_1$ and $\beta_1$,  i.\,e.,  at extraordinary points with valencies $m(\ne 4)$, $\alpha_1$ and $\beta_1$ satisfy $\alpha_1(m), \beta_1(m) \in [0,\,1)$ and  $\alpha_1(m)+ \beta_1(m) \in (0,\,1)$.

If $\alpha_1 \equiv 1/4$ and $\beta_1 \equiv 1/2$, then $B_1 = A^2$ holds on arbitrary meshes and $\overline{M}_n = B_1^r \,R$ or $\overline{M}_n = A\, B_1^r \,R$ is a midpoint subdivision scheme of degree $n$ if $n$ is odd or even, respectively.

\begin{theorem}[{\rm $C^1$-property of $\overline{M}_n$}]
\label{SATZ:GenMidSubII}
Let 
$\overline{M}_n = B_{r}\cdots B_1 R$ or $\overline{M}_n = B_{r} \cdots B_1 R$
be a scheme as above with $n\ge 2$. Let $\lambda$ be the dominant eigenvalue associated with frequency $1$ and let $\mu_0$ be the subdominant eigenvalue associated with frequency $0$.
If $\lambda>|\mu_0|$ for all valencies $m\ge 5$ or $m=3$, then $\overline{M}_n$ is a $C^1$-scheme.
\end{theorem}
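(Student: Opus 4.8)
The plan is to verify Reif's $C^1$-criterion for $\overline{M}_n$ in the same spirit as Theorem~\ref{SATZ:C1U}, after first disposing of the regular case. On the regular part of any mesh every $B_i$ coincides with $A^2$, so there $\overline{M}_n$ acts as the midpoint scheme $A^{n-1}R$ (with $a=n-1$, $v=0$, $r=1$); by Theorem~\ref{SATZ:C1REGULAR} this is a $C^1$-scheme, which settles all regular points. It remains to treat the extraordinary points of valence $m\ge 5$ and $m=3$.

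Since $A$, $R$, and each $B_i$ are affine invariant, form convex combinations with non-negative weights, and preserve rotational and reflective symmetry, Theorem~\ref{SATZ:SUBDOMINANCE1} applies verbatim to $\overline{M}_n$, giving for each segment angle $\varphi=2f\pi/m\in(0,\pi)$ a positive eigenvalue $\lambda_\varphi$ depending only on $\varphi$ and the strict ordering $\lambda_\alpha>\lambda_\theta>\lambda_\pi$ for $0<\alpha<\theta<\pi$. Thus the largest eigenvalue of nonzero frequency is $\lambda=\lambda_{2\pi/m}$ (frequencies $1$ and $m-1$), of geometric and algebraic multiplicity $2$. I would then promote $\lambda$ to the subdominant eigenvalue of the whole subdivision matrix $S$. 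By the triangular block structure of Section~\ref{SECTION:BasicObservations}, the eigenvalues of $S$ are those of its core block together with the outer-block eigenvalues, and the same non-negativity and connectivity argument as in Lemma~\ref{LEMMA:ZeilensummeVonB} bounds the latter by $\rho_A,\rho_B\le\sigma=\frac{1}{2}$. The core-block eigenvalues split by frequency into $1$ (the constant eigennet), the nonzero frequencies dominated by $\lambda$ through the ordering, and the leftover frequency-$0$ value $\mu_0$. Here the hypothesis enters: in contrast to the VAV-schemes of Corollary~\ref{FOLGERUNG:SUBDOMINANCE2}, the frequency-$0$ eigenstructure of $\overline{M}_n$ is valence- and parameter-dependent and $\mu_0$ may exceed $\sigma$, so we assume $\lambda>|\mu_0|$ outright. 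Combined with $\lambda>\lambda_{\pi/2}=\sigma\ge\rho_A,\rho_B$ for $m\ge 5$ (using Lemma~\ref{LEMMA:SUBDOMINANCEValence4}(b), which holds since $B_i=A^2$ at segment angle $\pi/2$), and with the small bounds on $\rho_A,\rho_B$ for $m=3$, this makes $\lambda$ the subdominant eigenvalue as in Corollary~\ref{FOLGERUNG:SUBDOMINANCE2} and Remark~\ref{REMARK:SUBDOMINANCE}, so that the characteristic mesh of $\overline{M}_n$ is well-defined.

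The remaining step is regularity and injectivity of this characteristic map, which I would obtain by adapting the cone technique of Lemma~\ref{LEMMA:AngleArea} and Theorem~\ref{SATZ:C1U} to $\overline{M}_n$. The crucial point is that $A$, $R$, and each $B_i$ map the pointed cone $\affr D_0=\affr C_0(2\pi/m,\pi/2)$ into itself under $\nabla_2$: because all weights in the mask of $B_i$ in Figure~\ref{Abb:MaskB} are non-negative (as $\alpha_i,\beta_i\ge 0$ and $0<\alpha_i+\beta_i<1$), every element of $\nabla_2(B_i\affr N)$ is a non-negative combination of elements of $\nabla_2(\affr N)$ or, by symmetry, a non-negative multiple of a spoke direction, and so stays in $\affr D_0$. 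Combined with the non-zero-edge statement of Theorem~\ref{SATZ:SUBDOMINANCE1}(a), this yields $\nabla_2(\affr C)\subset\affr D$, whence, as in the proof of Theorem~\ref{SATZ:C1U}, $\v c_y(\Omega_0)\subset\affr D$ and $\v c_x(\Omega_0)\subset\affr D-\pi/2$, giving regularity and injectivity of the characteristic map. Reif's $C^1$-criterion then concludes that $\overline{M}_n$ is $C^1$ at its extraordinary points; for $m=3$ the generalized criterion of Remark~\ref{REMARK:C1Um3} is available as a fallback should the subdominant eigenspace only be spanned by generalized eigenvectors.

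I expect the genuine obstacle to be the control of $\mu_0$. Unlike for plain midpoint subdivision, the frequency-$0$ eigenvectors of $\overline{M}_n$ are not valence-independent, so $\mu_0$ varies with $m$ and with the free parameters $\alpha_i,\beta_i$ and cannot be bounded \emph{a priori}; this is precisely what the hypothesis $\lambda>|\mu_0|$ absorbs, and the substance of the argument is to verify that no other eigenvalue---across frequencies and across the outer rings---exceeds $\lambda$. The secondary technical point, re-deriving the $\nabla_2$ cone-preservation for the parametrized masks $B_i$ instead of the averaging operators $A,R,V$ of Section~\ref{SECTION:GenMidC1}, goes through smoothly exactly because the constraints $\alpha_i,\beta_i\ge 0$ and $0<\alpha_i+\beta_i<1$ keep all mask weights non-negative.
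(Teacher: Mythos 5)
Your overall architecture matches the paper's proof: the regular case via the coincidence of $B_i$ with $A^2$ on regular meshes, convergence via the stochastic subdivision matrix, the hypothesis $\lambda>|\mu_0|$ to promote $\lambda_{2\pi/m}$ to a double subdominant eigenvalue, the bounds on $\rho_A$ and $\rho_B$, and finally Reif's criterion after establishing regularity and injectivity of the characteristic map. The one place where you genuinely diverge is also where your argument has a gap: you propose to re-derive the cone preservation of Lemma~\ref{LEMMA:AngleArea} directly for the parametrized operators $B_i$, justified by the non-negativity of the mask weights in Figure~\ref{Abb:MaskB}. That justification is a non sequitur. Non-negative \emph{vertex} weights do not imply that the second differences $\nabla_2(B_i\affr N)$ are non-negative combinations of elements of $\nabla_2(\affr N)$; the paper had to verify this property by explicit computation even for the fixed operators $A$, $R$, and $V$, and for general admissible $\alpha_i,\beta_i$ the differences emanating from the repositioned extraordinary vertex need not be non-negative combinations of the old differences (consider, e.g., $\beta_i$ near $0$, where the extraordinary vertex barely moves while its neighbours do). The same issue silently affects your claim that Theorem~\ref{SATZ:SUBDOMINANCE1} applies ``verbatim'' to $\overline{M}_n$: since the masks of the $B_i$ depend on the valence $m$, the assertion that $\lambda_\varphi$ depends only on $\varphi$ and not on $m$ is not automatic for these schemes.

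The paper closes both holes with a single observation that you invoke only in passing (to get $\lambda_{\pi/2}=\sigma$): on any rotation symmetric ringnet of non-zero frequency the central vertex is $\v 0$ and the ring sums around it vanish, so $B_i\affr N=A^2\affr N$ for \emph{every} choice of the parameters, and hence $\overline{M}_n\affr M=M_n\affr M$ on all such nets (primal for odd $n$, dual for even $n$). Consequently $\overline{M}_n$ and the plain midpoint scheme $M_n$ share all rotation symmetric eigenvectors of non-zero frequency together with their eigenvalues, and, given $\lambda>|\mu_0|$, they have the \emph{same} characteristic map. Its regularity and injectivity then come for free from the midpoint analysis --- $M_n$ is a $\mathrm{VAV}$-scheme with $|\mu_0(M_n)|=\lambda_\pi(M_n)=1/4<\lambda_{2\pi/m}(M_n)$ and $\rho_B(M_n),\rho_A(M_n)\le 1/4$, so Inequality~(\ref{EQ:m3}) also covers $m=3$ --- with no new cone computation for the $B_i$ at all. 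Your proof is salvageable precisely along these lines: the cone argument in Theorem~\ref{SATZ:C1U} is only ever applied to symmetric frequency-$1$ nets, on which $B_i$ literally equals $A^2$; if you make that reduction explicit instead of appealing to the non-negativity of the masks, your argument becomes correct and coincides with the paper's.
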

\begin{proof}
First, for regular meshes, $\overline{M}_n$ is the Lane-Riesenfeld scheme of degree $n$ and thus, it is a $C^1$-scheme.

Second, at extraordinary points, $\overline{M}_n$ is convergent because, as in the proof of Theorem~\ref{SATZ:C0}, 
it can easily be verified that the subdivision matrix of $\overline{M}_n$ is stochastic with the simple dominant eigenvalue $1$. 

Next, let $\affr M$ be a rotation symmetric  ringnet with non-zero frequency. If $n$ is odd, we require $\affr M$ to be primal and otherwise to be dual. For any choice of the parameters $\alpha_i$ and $\beta_i$ in $B_i$, we get $B_i \affr N = A^2 \affr N$ for any primal rotation symmetric ringnet $\affr N$ with non-zero frequency and thus $\overline{M}_n \affr M = M_n \affr M$. Hence, ${M}_n$ and $\overline{M}_n$ share the same rotation symmetric eigenvectors with non-zero frequency and the same associated eigenvalues. Using Theorem~\ref{SATZ:SUBDOMINANCE1} and the assumption $\lambda>|\mu_0|$, we get that $\overline{M}_n$ has the double subdominant eigenvalue $\lambda = \lambda_{2\pi/m}$. Hence, $\overline{M}_n$ and $M_n$ have the same characteristic map. 
For the midpoint scheme $M_n$,  we have $|\mu_0(M_n)| = \lambda_\pi (M_n)= 1/4 < \lambda_{2\pi/m} (M_n)$ and, according to Lemma~\ref{LEMMA:ZeilensummeVonB}, also $\rho_B(M_n), \rho_A(M_n) \le 1/4$. Thus, Inequality~(\ref{EQ:m3}) is satisfied and moreover, as in the proof of Theorem~\ref{SATZ:C1U}, the characteristic map is regular and injective for all valencies $m\ge 5$ or $m=3$.  (One can also see \cite[(8.1) Theorem]{PC2011} for a proof of the regularity and injectivity of the characteristic map.) Thus, Reif's $C^1$-criterion \cite[Theorem 3.6]{Reif95} is satisfied, which concludes the theorem.
\qquad
\end{proof}

\begin{remark}[{\rm Generalized Catmull-Clark subdivision}]
\label{REM:GeneralizedCC}
The schemes $\overline{M}_n$ can be further generalized by giving up the conditions $\alpha_i(4) = 1/4$ and $\beta_i(4) = 1/2$. This generalized subdivision is called \emph{generalized Catmull-Clark subdivision} and the resulting subdivision surfaces are analyzed in detail in the technical report \cite{CP2012-GenCC}, where we show that the generalized Catmull-Clark subdivision surfaces are $C^1$ at all regular points and at extraordinary points if one of the following conditions is satisfied: 
\begin{enumerate}
\item[(1)] $m\geq 5$ and $\alpha_i, \beta_i$ are constant functions.
\item[(2)] $m\geq 5$, $\alpha_i, \beta_i$ are non-constant functions, and $\lambda> |\mu_0|$.
\item[(3)] $m=3$ and Inequality~(\ref{EQ:m3}) is satisfied, i.\,e., $\lambda_{2\pi/m}>\max \{|\mu_0|, \rho_B, \rho_A\}$.
\end{enumerate}
Here, $m$ is the valence of the extraordinary element in the control mesh and $\lambda$ and $\mu_0$ are defined in the above theorem.
\end{remark}

%%%%%%%%%%%%%%%%%%%%%%%%%%%%%%
\section{Conclusion}\label{SECTION:Conclusion}

In this paper, two generalizations of midpoint subdivision, general midpoint subdivision and generalized Catmull-Clark subdivision, are introduced. They build two infinite classes of subdivision schemes, where the first class includes the midpoint subdivision schemes and the mid-edge, or simplest, subdivision scheme and the second class includes the midpoint subdivision schemes and the Catmull-Clark subdivision scheme with restricted parameters.

General midpoint subdivision surfaces for regular meshes are analyzed by estimating the norm of a special second order difference scheme and they are all $C^1$ continuous.
For irregular meshes, the smoothness is analyzed by a $C^1$ analysis technique developed in \cite{PC2011} for midpoint subdivision.  For this technique, we worked out several adaptions to the situation considered in this paper. In particular, the properties of the characteristic maps are analyzed without explicit knowledge of these characteristic maps and a deeper understanding of the spectral properties of the subdivision matrices at extraordinary points is provided. 

We see the results of this paper as a step towards even more general smoothness results in the spirit of the smoothness characterizations known for univariate corner cutting schemes \cite{GQ96,PPS97}.
In a consequent paper \cite{CP2012-GenTriMid}, the established $C^1$ analysis technique is generalized to apply to specific infinite \emph{triangular} subdivision schemes.
We hope that our $C^1$ analysis technique can be applied  to other subdivision schemes as well, e.\,g. schemes that can be factorized into general and simple convex combination operators.

%%%%%%%%%%%%%%%%%%%%%%%%%%%%%%
%\section*{Acknowledgments}

%\addcontentsline{toc}{section}{References}
\bibliographystyle{alpha}
\bibliography{../../Literatur/CAGDLiteratur}

\begin{thebibliography}{CDM91}

\bibitem[Cat74]{Catmull74}
Edwin~E. Catmull.
\newblock {\em A subdivision algorithm for computer display of curved
  surfaces.}
\newblock PhD thesis, The University of Utah, 1974.

\bibitem[CC78]{CC78}
Edwin~E. Catmull and Jim Clark.
\newblock Recursively generated {B}-spline surfaces on arbitrary topological
  meshes.
\newblock {\em Computer-Aided Design}, 10(6):350--355, November 1978.

\bibitem[CDM91]{CDM91}
Alfred~S. Cavaretta, Wolfgang Dahmen, and Charles~A. Micchelli.
\newblock {\em Stationary subdivision}, volume~93.
\newblock Memories of the American Mathematical Society, Number 453, Boston,
  MA, USA, 1991.

\bibitem[CLR84]{CLR84}
Elaine Cohen, Tom Lyche, and Richard Riesenfeld.
\newblock Discrete box splines and refinement algorithms.
\newblock {\em Computer Aided Geometric Design}, 1(2):131--148, 1984.

\bibitem[CP12a]{CP2012-GenTriMid}
Qi~Chen and Hartmut Prautzsch.
\newblock General triangular midpoint subdivision.
\newblock 2012.
\newblock To appear soon.

\bibitem[CP12b]{CP2012-GenCC}
Qi~Chen and Hartmut Prautzsch.
\newblock Generalized {C}atmull-{C}lark subdivision.
\newblock Karlsruhe Reports in Informatics~16, Department of Informatics,
  Karlsruhe Institute of Technology, 2012.

\bibitem[DGL91]{DGL91}
Nira Dyn, John~A. Gregory, and David Levin.
\newblock Analysis of uniform binary subdivision schemes for curve design.
\newblock {\em Constructive Approximation}, 7(1):127--147, 12 1991.

\bibitem[DM84]{DM84}
Wolfgang Dahmen and Charles~A. Micchelli.
\newblock Subdivision algorithms for the generation of box spline surfaces.
\newblock {\em Computer Aided Geometric Design}, 1(2):115--129, 1984.

\bibitem[DS78]{DS78}
Daniel W.~H. Doo and Malcolm~A. Sabin.
\newblock Behaviour of recursive division surfaces near extraordinary points.
\newblock {\em Computer-Aided Design}, 10(6):356--360, November 1978.

\bibitem[Dyn92]{Dyn92}
Nira Dyn.
\newblock Subdivision schemes in computer-aided geometric design.
\newblock In Will~A. Light, editor, {\em Advances in Numerical Analysis II,
  Wavelets, Subdivision algorithms, and Radial Basis Functions}, pages 36--104.
  Clarendon Press, Oxford, 1992.

\bibitem[GQ96]{GQ96}
John~A. Gregory and Ruibin Qu.
\newblock Nonuniform corner cutting.
\newblock {\em Computer Aided Geometric Design}, 13(8):763--772, 1996.

\bibitem[HJ85]{HJ85}
Roger~A. Horn and Charles~R. Johnson.
\newblock {\em Matrix analysis}.
\newblock Cambridge University Press, first published edition, 1985.

\bibitem[Kob00]{Kobbelt2000}
Leif Kobbelt.
\newblock $\sqrt{3}$-subdivision.
\newblock In {\em Proceedings of SIGGRAPH 2000, Computer Graphics Proceedings,
  Annual Conference Series. ACM}, pages 103--112, 2000.

\bibitem[LR80]{LR80}
Jeffrey~M. Lane and Richard~F. Riesenfeld.
\newblock A theoretical development for the computer generation and display of
  piecewise polynomial surfaces.
\newblock {\em IEEE Transactions on Pattern Analysis and Machine Intelligence},
  2(1):35--46, January 1980.

\bibitem[MP89]{MP89a}
Charles~A. Micchelli and Hartmut Prautzsch.
\newblock Uniform refinement of curves.
\newblock {\em Linear Algebra and its Applications}, 114/115:841--870, 1989.

\bibitem[PC11]{PC2011}
Hartmut Prautzsch and Qi~Chen.
\newblock Analyzing midpoint subdivision.
\newblock {\em Computer Aided Geometric Design}, 28(7):407--419, 2011.

\bibitem[PPS97]{PPS97}
Marco Paluszny, Hartmut Prautzsch, and Martin Sch\"afer.
\newblock A geometric look at corner cutting.
\newblock {\em Computer Aided Geometric Design}, 14(5):421--447, 1997.

\bibitem[PR97]{PR97}
J\"org Peters and Ulrich Reif.
\newblock The simplest subdivision scheme for smoothing polyhedra.
\newblock {\em ACM Transactions on Graphics}, 16(4):420--431, 1997.

\bibitem[PR08]{PR2008}
J\"org Peters and Ulrich Reif.
\newblock {\em Subdivision Surfaces}.
\newblock Springer, 2008.

\bibitem[Pra84]{Prautzsch84}
Hartmut Prautzsch.
\newblock {\em {U}nterteilungsalgorithmen f\"ur multivariate {S}plines ---
  {E}in geometrischer {Z}ugang}.
\newblock PhD thesis, {TU} {B}raunschweig, 1984.

\bibitem[Rei93]{Reif93}
Ulrich Reif.
\newblock {\em {N}eue {A}spekte in der {T}heorie der {F}reiformfl\"achen
  beliebiger {T}opologie}.
\newblock PhD thesis, {U}niversit\"at {S}tuttgart, 1993.

\bibitem[Rei95]{Reif95}
Ulrich Reif.
\newblock A unified approach to subdivision algorithms near extraordinary
  vertices.
\newblock {\em Computer Aided Geometric Design}, 12:153--174, 1995.

\bibitem[ZS01]{ZS2001}
Denis~N. Zorin and Peter Schr{\"o}der.
\newblock A unified framework for primal/dual quadrilateral subdivision
  schemes.
\newblock {\em Computer Aided Geometric Design}, 18(5):429--454, June 2001.

\end{thebibliography}

\end{document}